\newtheorem{lemma}{Lemma}
\newtheorem{definition}{Definition}
\newtheorem{theorem}{Theorem}
\newtheorem{corollary}{Corollary}
\newcommand{\eat}[1]{}
\renewcommand {\S}{\mathcal S}
\renewcommand{\L}{\mathbb{L}}
\begin{document}


\title{Sequential Deliberation for Social Choice} 
\author{Brandon Fain\thanks{Department of Computer Science, Duke University, 308 Research Drive, Durham, NC 27708. {\tt btfain@cs.duke.edu}. Supported by NSF grants CCF-1408784, CCF-1637397,  and IIS-1447554.}  
\and Ashish Goel\thanks{Supported by the Army Research Office Grant No. 116388, the Office of Naval Research Grant No. 11904718, and by the Stanford Cyber Initiative. Author's Address: Management Science and Engineering Department, Stanford University, Stanford CA 94305. Email: {\tt ashishg@stanford.edu}} \and
Kamesh Munagala\thanks{Department of Computer Science, Duke University, Durham NC 27708-0129. {\tt kamesh@cs.duke.edu}. Supported by NSF  grants CCF-1408784, CCF-1637397, and IIS-1447554.} \and
Sukolsak Sakshuwong\thanks{Management Science and Engineering Department, Stanford University, 353 Serra Mall, Stanford, CA 94305. {\tt sukolsak@stanford.edu}}}
\date{}



\maketitle 

\begin{abstract}

In large scale collective decision making, social choice is a normative study of how one ought to design a protocol for reaching consensus. However, in instances where the underlying decision space is too large or complex for ordinal voting, standard voting methods of social choice may be impractical.  How then can we design a mechanism - preferably decentralized, simple, scalable, and not requiring any special knowledge of the decision space - to reach consensus?  We propose sequential deliberation as a natural solution to this problem. In this iterative method, successive pairs of agents bargain over the decision space using the previous decision as a disagreement alternative. We describe the general method and analyze the quality of its outcome when the space of preferences define a median graph. We show that sequential deliberation finds a 1.208-approximation to the optimal social cost on such graphs, coming very close to this value with only a small constant number of agents sampled from the population. We also show lower bounds on simpler classes of mechanisms to justify our design choices. We further show that sequential deliberation is ex-post Pareto efficient and has truthful reporting as an equilibrium of the induced extensive form game. We finally show that for general metric spaces, the second moment of of the distribution of social cost  of the outcomes produced by sequential deliberation is also bounded.

\end{abstract}


\section{Introduction}
Suppose a university administrator plans to spend millions of dollars to update her campus, and she wants to elicit the input of students, staff, and faculty.  In a typical social choice setting, she could first elicit the bliss points of the students, say ``new gym,'' ``new library,'' and ``new student center.''   However, voting on these options need not find the social optimum, because it is not clear that the social optimum is even on the ballot.  In such a setting, \textit{deliberation} between individuals would find entirely new alternatives, for example ``replace gym equipment plus remodeling campus dining plus money for scholarship.'' This leads to finding a social optimum over a wider space of semi-structured outcomes that the system/mechanism designer was not originally aware of, and the participants had not initially articulated.

We therefore start with the  following premise: The mechanism designer may not be able to enumerate the outcomes in the decision space or know their structure, and this decision space may be too big for most ordinal voting schemes. (For instance, ordinal voting is difficult to implement in complex combinatorial spaces~\cite{Xia-chapter} or in continuous spaces~\cite{GargKGMM17}.) However, we assume that agents can still reason about their preferences and small groups of agents can negotiate over this space and collaboratively propose outcomes that appeal to all of them. Our goal is to design protocols based on such a primitive by which small group negotiation can lead to an aggregation of societal preferences without a need to formally articulate the entire decision space and without every agent having to report ordinal rankings over this space.

The need for small groups is motivated by a practical consideration as well as a theoretical one. On the practical side, there is no online platform, to the best of our knowledge, that has a successful history of large scale deliberation and decision making on complex issues; in fact, large online forums typically degenerate into vitriol and name calling when there is substantive disagreement among the participants. Thus, if we are to develop practical tools for decision making at scale, a sequence of small group deliberations appears to be the most plausible path. On the theoretical side, we understand the connections between sequential protocols for deliberation and axiomatic theories of bargaining for small groups, e.g. for pairs~\cite{RubinsteinBargaining,Rubinstein2}, but not for large groups, and we seek to bridge this gap.

\paragraph{Summary of Contributions.} Our main contributions in this paper are two-fold:
\begin{itemize}
\item A simple and practical sequential protocol that only requires agents to negotiate in pairs and collaboratively propose outcomes that appeal to both of them. 
\item A canonical analytic model in which we can precisely state properties of this protocol in terms of approximation of the social optimum, Pareto-efficiency, and incentive-compatibility, as well as compare it with simpler protocols.
\end{itemize}  


\subsection{Background: Bargaining Theory} 
Before proceeding further, we review bargaining, the classical framework for two-player negotiation in Economics. Two-person bargaining, as framed in~\cite{NashBargaining}, is a game wherein there is a disagreement outcome and two agents must cooperate to reach a decision; failure to cooperate results in the adoption of the disagreement outcome.  Nash postulated four axioms that the bargaining solution ought to satisfy assuming a convex space of alternatives: Pareto optimality (agents find an outcome that cannot be simultaneously improved for both of them), symmetry between agents, invariance with respect to affine transformations of utility (scalar multiplication or additive translation of any agent's utility should not change the outcome), and independence of irrelevant alternatives (informally that the presence of a feasible outcome that agents do not select does not influence their decision). Nash proved that the solution maximizing the Nash product (that we describe later) is the unique solution satisfying these axioms.  To provide some explanation of how two agents might find such a solution, ~\cite{RubinsteinBargaining} shows that Nash's solution is the subgame perfect equilibrium of a simple repeated game on the two agents, where the agents take turns making offers, and at each round, there is an exogenous probability of the process terminating with no agreement.

The two-person bargaining model is therefore clean and easy to reason about. As a consequence, it has been extensively studied. In fact, there are other models and solutions to two-person bargaining, each with a slightly different axiomatization~\cite{KalaiProportionalSolutions,KalaiSmorodinskyBargaining,MyersonComparableUtility}, as well as several experimental studies~\cite{RothBargaining,NeelinBargaining,BinmoreExperiment}. In a social choice setting, there are typically many more than two agents, each agent having their own complex preferences. Though bargaining can be generalized to $n$ agents with similar axiomatization and solution structure, such a generalization is considered impractical. This is because in reality it is difficult to get a large number of individuals to negotiate coherently; complexities come with the formation of coalitions and power structures~\cite{Harsanyi,Krishna}. Any model for simultaneous bargaining, even with three players~\cite{OddManOut}, needs to take these messy aspects into account.

\subsection{A Practical Compromise: Sequential Pairwise Deliberation}
In this paper, we take a middle path, avoiding both the complexity of explicitly specifying preferences in a large decision space that any individual agent may not even fully know (fully centralized voting), and that of simultaneous $n$-person bargaining (a fully decentralized cooperative game). We term this approach {\em sequential deliberation}.  We use 2-person bargaining as a basic primitive, and view deliberation as a sequence of pairwise interactions that refine good alternatives into better ones as time goes by.  

\renewcommand{\H}{\mathcal{H}}
\newcommand{\B}{\mathcal{B}}
\renewcommand{\L}{\mathcal{L}}
\renewcommand{\S}{\mathcal{S}}
\newcommand{\N}{\mathcal{N}}

More formally, there is a decision space $\S$ of feasible alternatives (these may be projects, sets of projects, or continuous allocations) and a set $\N$ of agents.  We assume each agent has a hidden cardinal utility for each alternative.  We encapsulate deliberation as a sequential process. The framework that we analyze in the rest of the paper is captured in Figure~\ref{fig:seq-del}.

\begin{figure}[htbp]
\centerline{
\fbox{
\parbox{\linewidth}{
\begin{enumerate}
\item In each round $t = 1,2,\ldots, T$:
\begin{enumerate}
\item A pair of agents $u^t$ and $v^t$ are chosen independently and uniformly at random with replacement. 
\item These agents are presented with a disagreement alternative $a^t$, and perform bargaining, which is encoded as a function $\B(u, v, a)$ as described below.
\item Agents $u_t$ and $v_t$ are asked to output a consensus  alternative; if they fail to reach a consensus then the alternative $a^t$ is output.
\item Let $o^t$ denote the alternative that is output in round $t$. We set $a^{t+1} = o^t$, where we assume $a^1$ is the bliss point of an arbitrary agent. 
\end{enumerate}
\item The final social choice is $a^T$. Note that this is equivalent to drawing an outcome at random from the distribution generated by repeating this protocol several times.
\end{enumerate}
}}}
\caption{\label{fig:seq-del} A framework for sequential pairwise deliberation.}
\end{figure}

Our framework is simple with low cognitive overhead, and is easy to implement and reason about. Though we don't analyze other variants in this paper, we note that the framework is flexible. For instance, the bargaining step can be replaced with any function $\B(u, v, a)$ that corresponds to an interaction between $u$ and $v$ using $a$ as the disagreement outcome; we assume that this function maximizes the Nash product, that is, it corresponds to the Nash bargaining solution. Similarly, the last step of social choice could be implemented by a central planner based on the distribution of outcomes produced.


\subsection{Analytical Model: Median Graphs and Sequential Nash Bargaining}
\label{sec:model1}
The framework in Figure~\ref{fig:seq-del} is well-defined and practical irrespective of an analytical model. However, we provide a simple analytical model for specifying the preferences of the agents in which we can precisely quantify the behavior of this framework as justification. 

\paragraph{Median Graphs.} We assume that the set $\S$ of alternatives are vertices of a {\em median graph}. A median graph has the property that for each triplet of vertices $u,v,w$, there is a unique point that is common to the three sets of shortest paths (since there may be multiple pairwise shortest paths), those between $u,v$, between $v,w$, and between $u,w$. This point is the unique {\em median} of $u,v,w$. We assume each agent $u$ has a bliss point $p_u \in \S$, and his disutility for an alternative $a \in \S$ is simply $d(p_u,a)$, where $d(\cdot)$ is the shortest path distance function on the median graph. (Note that this disutility can have an agent-dependent scale factor.)   Several natural graphs are median graphs, including  trees, points on the line, hypercubes, and grid graphs in arbitrary dimensions~\cite{Knuth}.  As we discuss in Section~\ref{sec:related}, because of their analytic tractability and special properties, median graphs have been extensively studied as structured models for spatial preferences in voting theory. Some of our results generalize to metric spaces beyond median graphs; see Section~\ref{sec:general} and Appendix~\ref{sec:budget}. 

\paragraph{Nash Bargaining.} The model for two-person bargaining is simply the classical {\em Nash bargaining} solution described before. Given a disagreement alternative $a$, agents $u$ and $v$ choose that alternative $o\in \S$ that maximizes:
$$ \mbox{Nash product } = \left( d(p_u,a) - d(p_u,o) \right) \times   \left( d(p_v,a) - d(p_v,o) \right) $$
subject to individual rationality, that is, $d(p_v,o) \leq d(p_v,a)$ and $d(p_u,o) \leq d(p_u,a)$. The Nash product maximizer need not be unique; in the case of ties we postulate that agents select the outcome that is closest to the disagreement outcome.  As mentioned before, the Nash product is a widely studied axiomatic notion of pairwise interactions, and is therefore a natural solution concept in our framework. 

\paragraph{Social Cost and Distortion.} The {\em social cost} of an alternative $a \in \S$ is given by $ SC(a) = \sum_{u \in \N} d(p_u,a)$. Let $a^* \in \S$ be the minimizer of social cost, {\em i.e.}, the {\em generalized median}. We measure the Distortion of outcome $a$ as 
\begin{equation}
\label{eq:Distortion1} 
\mbox{Distortion}(a) = \frac{SC(a)}{SC(a^*)}
\end{equation}
where we use the expected social cost if $a$ is the outcome of a randomized algorithm.

\medskip
Note that our model is fairly general. First, the bliss points of the agents in $\N$ form an arbitrary subset of $\S$. Further, the alternative chosen by bargaining need not correspond to any bliss point, so that pairs of agents are exploring the entire space of alternatives when they bargain, instead of just bliss points.  Assuming that disutility is some metric over the space follows recent literature~\cite{anshelevich2016randomized,anshelevich2015approximating,boutilier2015optimal,YuCheng,Anilesh}, and our tightest results are for median graphs specifically. 

\subsection{Our Results}
Before presenting our results, we re-emphasize that while we present analytical results for sequential deliberation in specific decision spaces, the framework in Figure~\ref{fig:seq-del} is well defined regardless of the underlying decision space and the mediator's understanding of the space.  At a high level, this flexibility and generality in practice are its key advantages.

\paragraph{Bargaining and Medians.} We first show in Section~\ref{sec:model} that on a median graph, Nash bargaining between agents $u$ and $v$ using disagreement outcome $a$ outputs the median of $p_u,p_v,a$. Therefore, $\B(u,v,a) = \mbox{Median}(p_u,p_v,a)$. On a general metric space, we show in Section~\ref{sec:second} that the Nash Bargaining outcome would lie on the shortest path between the agents, and the distance from an agent is proportional to its distance to the disagreement outcome. In a sense, agents only need to explore options on the shortest path between them.

\paragraph{Bounding Distortion.} Our main result in Section~\ref{sec:main} shows that for sequential Nash bargaining on a median graph, the expected Distortion of outcome $a^T$ has an upper bound approaching $1.208$ as $T \rightarrow \infty$. Surprisingly, we show that in $T =  \log_2\frac{1}{\epsilon} + 2.575$ steps, the expected Distortion is at most $1.208 + \epsilon$, independent of the number of agents, the size of the median space, and the initial disagreement point $a^1$. For instance, the Distortion falls below $1.22$ in at most $9$ steps of deliberation, which only requires a random sample of at most $20$ agents from the population to implement.  

In Section~\ref{sec:lb}, we ask: {\em How good is our numerical bound?} We present a sequence of lower bounds for social choice mechanisms that are allowed to use increasingly richer information about the space of alternatives on the median graph. This also leads us to make qualitative statements about our deliberation scheme.

\begin{itemize}
\item We show that any social choice mechanism that is restricted to choosing the bliss point of some agent cannot have Distortion better than $2$. 
More generally, it was recently shown~\cite{GrossAX17} that even eliciting the top $k$ alternatives for each agent does not improve the bound of $2$ for median graphs unless $k = \Omega(|\S|)$.  In effect, we show that forcing the agents to reason about cardinal utilities via deliberation leads to new alternatives that are more powerful at reducing Distortion than simply eliciting and aggregating reasonably detailed ordinal rankings.
\item Next consider mechanisms that choose, for some triplet $(u,v,w)$ of agents with bliss points $p_u, p_v, p_w$, the median outcome $m_{uvw} = \B(u,v,p_w)$. We show this has Distortion at least $1.316$, which means that sequential deliberation  is superior to one-shot deliberation that outputs $o^1$ where $a^1$ is the bliss point of some agent. 
\item Finally, for every pair of agents $(u,v)$, consider the set of  alternatives on a shortest path between $p_u$ and $p_v$. This  encodes all deliberation schemes where $\B$ finds a Pareto-efficient  alternative for some $2$ agents at each step. We show that any such mechanisms has Distortion ratio at least $9/8 = 1.125$. This space of mechanisms captures sequential deliberation, and shows that sequential deliberation is close to best possible within this space.
\end{itemize}


\paragraph{Properties of Sequential Deliberation.} We next show that sequential deliberation has several natural desiderata on median graphs in Section~\ref{sec:properties}.  In particular:

\begin{itemize}
    \item Under mild assumptions, the limiting distribution over outcomes of sequential deliberation is {\em unique}. 
    \item For every $T \ge 1$, the outcome $o^T$ of sequential deliberation is {\em ex-post Pareto-efficient}, meaning that there is no other alternative that has at most that social cost for all agents and strictly better cost for one agent. This is not a priori obvious, since the outcome at any one round only uses inputs from two agents; though it is Pareto-efficient for those two agents, it could very well be sub-optimal for the other agents.
    \item Interpreted as a mechanism, truthful play is a {\em sub-game perfect Nash equilibrium} of sequential deliberation. More precisely, we consider a different view of the function $\B(u,v,a)$ that encodes Nash Bargaining. Suppose agents $u$ and $v$ report their bliss points, and the platform implements the function $\B$ that computes the median of $p_u, p_v$, and $a$. In a sequential setting, would any agent have incentive to misreport their bliss point so that they gain an advantage (in terms of expected distance to the final social choice) in the induced extensive form game? We show that the answer is negative -- on a median graph, truthfully reporting bliss points is a sub-game perfect Nash equilibrium of the induced extensive form game.
\end{itemize}

\paragraph{Beyond Median Graphs.} In Section~\ref{sec:general}, we consider general metric spaces. We show that the Distortion of sequential deliberation is always at most a factor of $3$. 
More surprisingly, we show that sequential deliberation has constant distortion even for the second moment of the distribution of social cost of the outcomes, {\em i.e.}, the latter is at most a constant factor worse than the optimum squared social cost. This has the following practical implication: A policy designer can look at the distribution of outcomes produced by deliberation, and know that the standard deviation in social cost is comparable to its expected value, which means deliberation eliminates outlier alternatives and concentrates probability mass on more central alternatives.\footnote{See also recent work by~\cite{varianceProcaccia} that considers minimizing the variance of randomized truthful mechanisms.} We also show that such a claim cannot be made for random dictatorship, whose distortion on squared social cost is unbounded. 

\subsection{Related Work} 
\label{sec:related}
While the real world complexities of the model are beyond the analytic confines of this work, deliberation as an important component of collective decision making and democracy is studied in political science.  For examples (by no means exhaustive), see \cite{DeliberativeDemocracy,DeliberativeDemocracy2}.  There is ongoing related work on Distortion of voting for simple analytical models like points in $\mathbb{R}$ \cite{facilityLocationFeldman}, and in general metric spaces~\cite{anshelevich2016randomized,anshelevich2015approximating,boutilier2015optimal,YuCheng,Anilesh}. This work focuses on optimally aggregating ordinal preferences, say the top k preferences of a voter~\cite{GrossAX17}. In contrast, our scheme elicits alternatives as the outcome of bargaining rounds that require agents to reason about cardinal preferences.. As mentioned before,  we essentially show that for median graphs, unless $k$ is very large, such deliberation has provably lower distortion than social choice schemes that elicit purely ordinal rankings.



Median graphs and their ordinal generalization, median spaces, have been extensively studied in the context of social choice. The special cases of trees and grids have been  studied as structured models for voter preferences~\cite{SchummerV,Barbera}. For general  median spaces, it is known that the Condorcet winner (that is an alternative that pairwise beats any other alternative in terms of voter preferences) is strongly related to the generalized median~\cite{BandeltB84,SabanSM12,WendellM81} -- if the former exists, it coincides with the latter. Nehring and Puppe~\cite{NehringP07} shows that any single-peaked domain which admits a non-dictatorial and neutral strategy-proof social choice function is a median space. Clearwater {\em et al.}~\cite{Clearwater15} also showed that any set of voters and alternatives on a median graph will have a Condorcet winner. In a sense, these are the largest class of structured and spatial preferences where ordinal voting over the entire space of alternatives leads to a ``clear winner" even by pairwise comparisons. Our work stems from the assumption that this space may not be fully known to the mechanism designer or all agents.

Our paper is inspired by the {\em triadic consensus} results of Goel and Lee~\cite{GoelLee}. In that work, the authors focus on small group interactions with the goal of reaching consensus. In their model, three people deliberate at the same time, and they choose a fourth individual to whom they grant their votes. This individual takes these votes and participates in future rounds, until all votes accumulate with one individual, who is the consensus outcome. The analysis proceeds through a median graph, on which the authors show that the Distortion of the consensus approaches $1$. However, the protocol crucially assumes individuals know the positions of other individuals, and requires the space of alternatives to coincide with the space of individuals. We make neither of these assumptions -- in our case, the space of alternatives can be much larger than the number of agents, and further, individuals interact with others only via bargaining. This makes our protocol more practical, but at the same time, restricts our Distortion to be bounded away from $1$.  

The notion of {\em democratic equilibrium}~\cite{Hylland,GargKGMM17} considers social choice mechanisms in continuous spaces where individual agents with complex utility functions perform update steps inspired by gradient descent, instead of ordinal voting on the entire space. However, these schemes do not involve deliberation between agents and have little formal analysis of convergence.  Several works have considered {\em iterative voting} where the current alternative is put to vote against one proposed by different random agent chosen each step~\cite{IteratedMajority,convergenceIterative,equilibriumIterative}, or other related schemes~\cite{convergencePlurality}. In contrast with our work, these protocols are not deliberative and require voting among several agents each step; furthermore, the analysis focuses on convergence to an equilibrium instead of welfare or efficiency guarantees.

\section{Median Graphs and Nash Bargaining}
\label{sec:model}
In this section we will use the notation $\mathcal{N}$ for a set of agents, $\mathcal{S}$ for the space of feasible alternatives, and $\mathcal{H}$ for a distribution over $\mathcal{S}$.  Most of our results are for the analytic model given earlier wherein the set $\S$ of alternatives are vertices of a {\em median graph}; see Figure~\ref{figure:medianGraph} for some examples.
\begin{definition}
A median graph $G(\S,E)$ is an unweighted and undirected graph  with the following property:  For each triplet of vertices $u,v,w \in \S \times \S \times \S$, there is a unique point that is common to the shortest paths (which need not be unique between a given pair) between $u,v$, between $v,w$, and between $u,w$. This point is the unique {\em median} of $u,v,w$. 
\end{definition}

\begin{figure}[!h]
\centering
\begin{tikzpicture} 

\fill[black] (-4,0) circle (0.1);
\fill[black] (-3,0) circle (0.1);
\fill[black] (-2,0) circle (0.1);
\fill[black] (-1,0) circle (0.1);
\draw[black, thick] (-4,0) -- (-3,0) -- (-2,0) -- (-1,0);

\fill[black] (1,1) circle (0.1);
\fill[black] (0,0) circle (0.1);
\fill[black] (1,0) circle (0.1);
\fill[black] (2,0) circle (0.1);
\draw[black, thick] (1,1) -- (0,0);
\draw[black, thick] (1,1) -- (1,0);
\draw[black, thick] (1,1) -- (2,0);

\fill[black] (3,0) circle (0.1);
\fill[black] (3,1) circle (0.1);
\fill[black] (4,0) circle (0.1);
\fill[black] (4,1) circle (0.1);
\draw[black, thick] (3,0) -- (3,1) -- (4,1) -- (4,0) -- (3,0);
\fill[black] (5,0) circle (0.1);
\fill[black] (5,1) circle (0.1);
\draw[black, thick] (4,0) -- (5,0) -- (5,1) -- (4,1);

\fill[gray] (2,1.75) circle (0.0);

\end{tikzpicture}
\caption{Examples of Median Graphs}
\label{figure:medianGraph}
\end{figure}
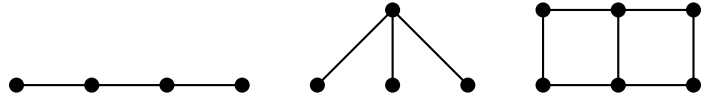

In the framework of Figure~\ref{fig:seq-del}, we assume that at every step, two agents perform Nash bargaining with a disagreement alternative.  The first results characterize Nash bargaining on a median graph.  In particular, we show that Nash bargaining at each step will select the median of bliss points of the two agents and the disagreement alternative.  After that, we show that we can analyze the Distortion of sequential deliberation on a median graph by looking at the embedding of that graph onto the hypercube.

\begin{lemma}
\label{lemma:median}
For any median graph $G=(\S,E)$, any two agents $u,v$ with bliss points $p_u, p_v \in \S$, and any disagreement outcome $a \in \S$, let $M$ be the median.  Then $M$ maximizes the Nash product of $u$ and $v$ given $a$, and is the maximizer closest to $a$.
\end{lemma}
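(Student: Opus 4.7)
The plan is to exploit the classical fact that every median graph admits an isometric embedding into a hypercube $\{0,1\}^n$, so that graph distance becomes Hamming distance and the median of any three vertices is the coordinate-wise majority. This reduction lets me analyze the Nash product one coordinate at a time.

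First I would classify each coordinate $i$ according to the triple $(p_{u,i}, p_{v,i}, a_i)$ into four types: (1) all three agree, (2) $p_{u,i} = p_{v,i} \neq a_i$, (3) $p_{u,i} = a_i \neq p_{v,i}$, and (4) $p_{v,i} = a_i \neq p_{u,i}$. Let $k$ denote the number of Type~2 coordinates; the median $M$ agrees with $a$ on Types~1, 3, 4 and flips on Type~2, so $d(M,a) = k$. For a candidate outcome $o$, let $\alpha, \epsilon, \gamma, \delta$ count the coordinates of Types 2, 1, 3, 4 respectively where $o_i \neq a_i$. A coordinatewise computation yields
\[
d(p_u, a) - d(p_u, o) = \alpha + \delta - \gamma - \epsilon, \qquad d(p_v, a) - d(p_v, o) = \alpha + \gamma - \delta - \epsilon,
\]
so the Nash product factors as $(\alpha - \epsilon)^2 - (\delta - \gamma)^2$.

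Next I would invoke individual rationality: both factors above are nonnegative, and summing them forces $\alpha \geq \epsilon$. Combined with $\alpha \leq k$, this bounds the Nash product by $(\alpha-\epsilon)^2 \leq \alpha^2 \leq k^2$. The median $M$ corresponds to $\alpha = k$ and $\epsilon = \gamma = \delta = 0$, and is guaranteed to lie in $\S$ by the defining property of median graphs; it therefore attains this bound and is a maximizer.

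For the closest-maximizer claim, I would observe that equality in the Nash product bound forces $\alpha = k$, $\epsilon = 0$, and $\gamma = \delta$, whence $d(o, a) = \alpha + \epsilon + \gamma + \delta = k + 2\gamma$ is uniquely minimized at $\gamma = 0$, which corresponds to $o = M$. The main subtlety I anticipate is that the optimization is over vertices of $\S$ rather than the ambient hypercube; this turns out not to be an obstacle precisely because $M \in \S$ already attains the maximum over all of $\{0,1\}^n$, so restricting to $\S$ can only preserve its optimality and strict closeness to $a$.
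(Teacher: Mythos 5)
Your proof is correct, but it takes a genuinely different route from the paper's. The paper argues directly on the median graph: setting $\alpha = d(p_u,M)$, $\beta = d(p_v,M)$, $\gamma = d(a,M)$ and using only the betweenness of $M$ plus the triangle inequality, it bounds the Nash product of any individually rational $o$ by $\gamma^2 - (\alpha - x)^2 \le \gamma^2$ where $x = d(o,p_u)$, with equality at $M$; closeness to $a$ then follows from one more triangle inequality. You instead route everything through the isometric hypercube embedding and the fact that medians map to coordinate-wise majorities --- which is precisely the content of the paper's Lemma~\ref{lemma:embed}, proved \emph{after} this lemma --- and then do an exact coordinate-type bookkeeping. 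The dependency reversal is harmless (Lemma~\ref{lemma:embed} does not rely on Lemma~\ref{lemma:median}), and your accounting is right: $d(p_u,a)-d(p_u,o) = \alpha+\delta-\gamma-\epsilon$ and $d(p_v,a)-d(p_v,o)=\alpha+\gamma-\delta-\epsilon$ give Nash product $(\alpha-\epsilon)^2-(\delta-\gamma)^2$, individual rationality forces $\alpha \ge \epsilon$ so the product is at most $k^2$, and the equality analysis pins down every maximizer as having $\alpha=k$, $\epsilon=0$, $\gamma=\delta$, hence distance $k+2\gamma \ge d(M,a)$ from $a$. What your approach buys is a sharper structural payoff: a complete description of the set of Nash-product maximizers, from which the ``closest to $a$'' claim falls out as an exact identity rather than a triangle-inequality argument. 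What the paper's approach buys is self-containedness and portability: it never leaves the graph metric, and essentially the same shortest-path calculation is what the authors later reuse for general metric spaces in Section~\ref{sec:second}, where no hypercube structure is available.
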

\begin{proof}
Since $G$ is a median graph, $M$ exists and is unique; it must by definition be the intersection of the three shortest paths between $(p_u, p_v), (p_u, a), (p_v, a)$.  Note that we can therefore write $d(p_u, a) = d(p_u, M) + d(M, a)$ and similarly for $d(p_v, a)$.  
Let $\alpha = d(p_u,M)$; $\beta =  d(p_v,M)$; and $\gamma = d(a,M)$. Suppose Nash bargaining finds an outcome $o^* \in \S$. Let $x = d(o^*,p_u)$ and $y = d(o^*,p_v)$. Observing that $M$ lies on the shortest path between $p_u$ and $p_v$, and using the triangle inequality, we obtain that $ x+y \ge \alpha+\beta \implies \beta \leq x + y - \alpha$.

Noting that $d(p_u,a) = \alpha+\gamma$ and $d(p_v,a) = \beta + \gamma$, the Nash product of the point $o^*$ is:
$$ (\alpha + \gamma - x) \times ( \beta + \gamma - y) \le  (\alpha + \gamma - x) (\gamma -(\alpha - x)) = \gamma^2 - (\alpha-x)^2$$
This is maximized when $x = \alpha$ and $y = \beta$. One possible maximizer is therefore the point $o^* = M$. Suppose $d(o^*,a) < \gamma$, then by the triangle inequality, $d(p_u,o^*) > \alpha$, and similarly $d(p_v,o^*) > \beta$. Therefore, there cannot be a closer maximizer of the Nash product to $a$ than the point $M$.

 \end{proof}

\paragraph{Hypercube Embeddings.} For any median graph $G = (\S,E)$, there is an isometric embedding $\phi:G \rightarrow Q$ of $G$ into a hypercube $Q$~\cite{Knuth}. This embedding maps vertices $\S$ into a subset of vertices of $Q$ so that all pairwise distances between vertices in $\S$ are preserved by the embedding.  A simple example of this embedding for a tree is shown in Figure~\ref{figure:hypercubeEmbedding}. We use this embedding to show the following result, in order to simplify subsequent analysis.

\begin{figure}[!h]
\centering
\begin{tikzpicture} 
\draw[black, thick] (1,1) -- (0,0);
\draw[black, thick] (1,1) -- (1,0);
\draw[black, thick] (1,1) -- (2,0);
\fill[red] (1,1) circle (0.1);
\fill[green] (0,0) circle (0.1);
\fill[blue] (1,0) circle (0.1);
\fill[brown] (2,0) circle (0.1);

\draw[black, very thick] (2.25, 0.5) -- (2.75, 0.5) -- (2.65, 0.6) -- (2.75, 0.5) -- (2.65, 0.4);

\draw[black, thick] (3,0) -- (3,1) -- (4,1) -- (4,0) -- (3,0);
\draw[black, thick] (3,1) -- (3.25,1.25) -- (4.25,1.25) -- (4.25,0.25);
\draw[black, thick] (4,1) -- (4.25,1.25);
\draw[black, thick] (4,0) -- (4.25, 0.25);
\fill[green] (3,0) circle (0.1);
\fill[red] (3,1) circle (0.1);
\fill[black] (4,0) circle (0.1);
\fill[blue] (4,1) circle (0.1);
\fill[brown] (3.25,1.25) circle (0.1);
\fill[black] (4.25,0.25) circle (0.1);
\fill[black] (4.25,1.25) circle (0.1);

\fill[gray] (2,1.75) circle (0.0);

\end{tikzpicture}
\caption{The hypercube embedding of a 4-vertex star graph}
\label{figure:hypercubeEmbedding}
\end{figure}
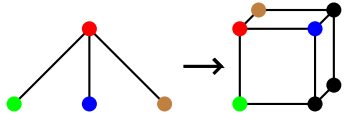

\begin{lemma}
\label{lemma:embed}
Let $G(\S,E)$ be a median graph, and let $\phi$ be its isometric embedding into hypercube $Q(V,E')$. For any three points $t,u,v \in \S$, let $M_G$ be the median of vertices $t,u,v$ and let $M_Q$ be the median of vertices $\phi(t),\phi(u),\phi(v) \in V$.  Then $\phi(M_G) = M_Q$.
\end{lemma}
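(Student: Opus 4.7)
The plan is to exploit the characterization of the median as the unique vertex simultaneously lying on shortest paths between each of the three pairs, and then to transfer this characterization across the isometric embedding. Since both $G$ and $Q$ are median graphs, in each space the median of any three vertices exists and is unique, so I only need to show that $\phi(M_G)$ is a point that lies on shortest paths between $\phi(t),\phi(u)$, between $\phi(u),\phi(v)$, and between $\phi(t),\phi(v)$ in $Q$; uniqueness of the median in $Q$ then forces $\phi(M_G) = M_Q$.

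First I would write down the defining equalities of $M_G$ in $G$: because $M_G$ lies on a shortest path between each pair (this is the characterization used already in the proof of Lemma~\ref{lemma:median}), we have
\[
d_G(t,u) = d_G(t,M_G) + d_G(M_G,u),
\]
and the analogous equalities for the pairs $(u,v)$ and $(t,v)$. Next, because $\phi$ is an isometric embedding, each occurrence of $d_G$ can be replaced by $d_Q$ applied to the images, yielding
\[
d_Q(\phi(t),\phi(u)) = d_Q(\phi(t),\phi(M_G)) + d_Q(\phi(M_G),\phi(u)),
\]
and analogously for the other two pairs. In a hypercube (indeed in any graph), this triangle equality is exactly the condition for $\phi(M_G)$ to lie on some shortest path between $\phi(t)$ and $\phi(u)$, so $\phi(M_G)$ lies on shortest paths between all three pairs of images in $Q$.

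Finally I would invoke the fact that $Q$ is itself a median graph: the median $M_Q$ of $\phi(t),\phi(u),\phi(v)$ is the \emph{unique} vertex of $Q$ common to the three families of shortest paths. Since $\phi(M_G)$ has been shown to have this property, we conclude $\phi(M_G) = M_Q$. I expect the proof to be short; the only mild subtlety is making sure the characterization ``lies on all three shortest paths'' is used consistently (i.e., as the defining property of a median in a median graph), and that the isometry of $\phi$ really does preserve the relevant triangle equalities because $\phi$ maps vertices of $G$ to vertices of $Q$ with distances preserved, not merely to points in some continuous ambient space. Once that is clear, no further calculation is required.
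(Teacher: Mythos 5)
Your proposal is correct and follows essentially the same route as the paper's proof: write the three triangle equalities characterizing $M_G$, transfer them to $Q$ via the isometry, and invoke uniqueness of the median in the hypercube (itself a median graph) to conclude $\phi(M_G) = M_Q$. No gaps.
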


\begin{proof}
By definition, since $\phi$ is an isometric embedding \cite{Knuth}, 
\begin{equation} \label{eq:isometric} d(x,y) = d(\phi(x), \phi(y)) \mbox{ for all } x,y \in \S \end{equation}
Since $G$ is a median graph, $M_G$ is the unique median of $t,u,v \in \S$, which by definition satisfies the equalities:
\begin{equation*}
    \begin{split}
        & d(t,u) = d(t,M_G) + d(M_G,u) \\
        & d(t,v) = d(t,M_G) + d(M_G,v) \\
        & d(u,v) = d(u,M_G) + d(M_G,v)
    \end{split}
\end{equation*}

$Q(V,E')$ is a hypercube, and is thus also a median graph, so $M_Q$ is the unique median of $\phi(t),\phi(u),\phi(v) \in V$, which by definition satisfies the equalities
    
    \renewcommand{\labelenumi}{\Roman{enumi}}
    \begin{enumerate}
        \item $d(\phi(t),\phi(u)) = d(\phi(t),M_Q) + d(M_Q,\phi(u))$
        \item $d(\phi(t),\phi(v)) = d(\phi(t),M_Q) + d(M_Q,\phi(v))$
        \item $d(\phi(u),\phi(v)) = d(\phi(u),M_Q) + d(M_Q,\phi(v))$
    \end{enumerate}
Applying Equation (\ref{eq:isometric}) to the first set of equalities shows that $\phi(M_G)$ satisfies equalities I,II, and III respectively.  But $\phi(M_G) \in V$ and $M_Q$ is the unique vertex in $V$ satisfying equalities I,II, and III.  Therefore, $\phi(M_G) = M_Q$.    
\end{proof}


\section{The Efficiency of Sequential Deliberation}
\label{sec:main}
In this section, we show that the Distortion of sequential deliberation is at most $1.208$. We then show that this bound is significant, meaning that mechanisms from simpler classes are necessarily constrained to have higher Distortion values. 

\subsection{Upper Bounding Distortion}
Recall the framework for sequential deliberation in Figure~\ref{fig:seq-del} and the definition of Distortion in Equation (\ref{eq:Distortion1}). We first map the problem into a problem on hypercubes using Lemma~\ref{lemma:embed}. 


\begin{corollary}
\label{lemma:hypercube}
    Let $G = (\S,E)$ be a median graph, let $\phi:G \rightarrow Q$ be an isometric embedding of $G$ onto a hypercube $Q(V,E')$, and let $\mathcal{N}$ be a set of agents such that each agent $u$ has a bliss point $p_u \in \S$.  Then the Distortion of sequential deliberation on $G$ is at most the Distortion of sequential deliberation on $\phi(G)$ where each agent's bliss point is $\phi(p_u)$.
\end{corollary}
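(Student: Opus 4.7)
The plan is to couple the sequential deliberation process on $G$ with a corresponding process on the full hypercube $Q$ via the isometric embedding $\phi$, show the numerators (expected social cost of the outcome) match sample-by-sample, and observe that the denominator (optimal social cost) can only decrease when the feasible set expands from $\phi(\mathcal{S})$ to all of $V$. Concretely, I would run the two processes on the same random sequence of agent pairs $(u^t,v^t)$, with initial disagreement $a^1 \in \mathcal{S}$ on $G$ and $\phi(a^1) \in V$ on $Q$, producing outcome trajectories $\{a^t_G\}$ and $\{a^t_Q\}$.

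The core step is an induction on $t$ showing $\phi(a^t_G) = a^t_Q$. The base case holds by construction. For the inductive step, Lemma~\ref{lemma:median} says the round-$t$ bargaining output on $G$ is $M_G = \mathrm{Median}(p_{u^t}, p_{v^t}, a^t_G)$, and the same lemma applied to the hypercube $Q$ (which is itself a median graph) gives $M_Q = \mathrm{Median}(\phi(p_{u^t}), \phi(p_{v^t}), a^t_Q)$. Using the inductive hypothesis $a^t_Q = \phi(a^t_G)$ together with Lemma~\ref{lemma:embed}, we conclude $\phi(M_G) = M_Q$, so $\phi(a^{t+1}_G) = a^{t+1}_Q$. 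Isometry of $\phi$ then gives, for every agent $u$, $d(p_u, a^t_G) = d(\phi(p_u), \phi(a^t_G)) = d(\phi(p_u), a^t_Q)$; summing over $u$ yields $SC_G(a^t_G) = SC_Q(a^t_Q)$ pointwise on the sample space, hence $\mathbb{E}[SC_G(a^T_G)] = \mathbb{E}[SC_Q(a^T_Q)]$.

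For the denominators, let $a^*_G$ minimize social cost over $\mathcal{S}$ and $a^*_Q$ minimize social cost over $V$. By isometry $SC_G(a^*_G) = SC_Q(\phi(a^*_G))$, and since $\phi(a^*_G) \in V$ is one candidate in the minimization defining $a^*_Q$, we get $SC_G(a^*_G) \ge SC_Q(a^*_Q)$. Combining the numerator equality with this denominator inequality yields $\mathbb{E}[SC_G(a^T_G)]/SC_G(a^*_G) \le \mathbb{E}[SC_Q(a^T_Q)]/SC_Q(a^*_Q)$, which is exactly the claimed bound. The only subtle point — and the main thing to state carefully — is that the ``Distortion on $\phi(G)$'' here is being compared against the optimum over the whole hypercube $V$ rather than against the image $\phi(\mathcal{S})$; this is precisely why the conclusion is an inequality rather than an equality, and it is what makes the reduction useful for subsequent analysis, since we can now bound Distortion assuming the agents' bliss points and the alternative space both live in a hypercube.
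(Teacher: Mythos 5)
Your proposal is correct and follows essentially the same route as the paper's proof: couple the two processes on a common realization of agent pairs, induct using Lemma~\ref{lemma:median} and Lemma~\ref{lemma:embed} to show the trajectory on $Q$ is the $\phi$-image of the trajectory on $G$, use isometry to equate the social costs of the outcomes, and observe that the optimum over all of $V$ can only be smaller than that over $\phi(\S)$, so the Distortion can only increase. The one point you flag as subtle --- that the comparison on $Q$ is against the optimum over the whole hypercube --- is exactly the closing observation in the paper's argument.
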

\begin{proof}
Fix an initial disagreement outcome $a^1 \in \S$ and an arbitrary list of $T$ pairs of agents $(u^1,v^1)$, $(u^2, v^2)$,  ..., $(u^T,v^T)$.  In round 1 bargaining on $G$, Lemma~\ref{lemma:median} implies that sequential deliberation will select $o^1 = \mbox{median}(a^1,p_u^1,p_v^1)$.  Furthermore, Lemma~\ref{lemma:embed} implies that if we had considered $\phi(G)$ and bargaining on $\phi(a^1), \phi(p_u^1), \phi(p_v^1)$ instead, sequential deliberation would have selected $\phi(o^1)$.  Suppose at some round $t$ that we have a disagreement outcome $a^t$.  Then the same argument yields that if $o^t$ is the bargaining outcome on $G$, $\phi(o^t)$ would have been the bargaining outcome on $\phi(G)$.  Thus, by induction, we have that if the list of outcomes on $G$ is $o^1,...,o^T$ then the list of outcomes on $\phi(G)$ is $\phi(o^1),...,\phi(o^T)$.  But recall that $\phi(\cdot)$ is an isometric embedding and the social cost of an alternative (as defined in Section~\ref{sec:model1}) is just its sum of distances to all points in $\N$, so $a^T$ and $\phi(a^T)$ have the same social cost. 
 
Furthermore, let $a^* \in \S$ denote the generalized median of $\N$. Then, $\phi(a^*)$ has the same social cost as $a^*$. This means the median of the embedding of $\N$ into $Q$ has at most this social cost, which in turn means that the Distortion of sequential deliberation in the embedding cannot decrease.
\end{proof}


Our main result in this section  shows that as $t \rightarrow \infty$, the Distortion of sequential deliberation approaches $1.208$, with the convergence rate being exponentially fast in $t$ and independent of the number of agents $|\mathcal{N}|$,  the size of the median space $|\mathcal{S}|$, and the initial disagreement point $a^1$. In particular, the Distortion is at most $1.22$ in at most $9$ steps of deliberation, which is indeed a very small number of steps.

\begin{theorem}
\label{theorem:main}
Sequential deliberation among a set $\mathcal{N}$ of agents, where the decision space $\mathcal{S}$ is a median graph, yields $\mathbb{E}[\mbox{Distortion}(a^t)] \leq  1.208 +  \frac{6}{2^t}$.
\end{theorem}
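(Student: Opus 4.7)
The approach is to use the isometric hypercube embedding from Corollary~\ref{lemma:hypercube} to reduce the problem to a hypercube $\S = \{0,1\}^d$, and then exploit the fact that both Hamming social cost and the median operation decompose coordinate-wise. Let $n = |\N|$. For coordinate $i \in [d]$, let $q_i$ be the fraction of agents whose $i$-th coordinate equals $1$; by flipping coordinates if needed (which preserves both the median operation and all Hamming distances), assume $q_i \le 1/2$, so that the optimal contribution from coordinate $i$ is $n q_i$ and $SC(a^*) = n \sum_i q_i$. By Lemmas~\ref{lemma:median} and \ref{lemma:embed}, the median of three hypercube vertices is their coordinate-wise majority, so each coordinate of $a^t$ evolves as an independent two-state Markov chain driven by the sampled pair $(u^t, v^t)$; by linearity of expectation, the joint coupling across coordinates is irrelevant to $\mathbb{E}[SC(a^t)] = \sum_i \mathbb{E}[\text{cost}_i^t]$.

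For coordinate $i$, the transition probabilities are $P(0 \to 1) = q_i^2$ (a $0$ flips only when both sampled agents have $i$-th bit equal to $1$) and $P(1 \to 0) = (1-q_i)^2$. This chain has stationary probability $\pi_1 = q_i^2/(q_i^2 + (1-q_i)^2)$ on state $1$ and nontrivial eigenvalue $\lambda_i = 2q_i(1-q_i) \le 1/2$. Its stationary expected cost equals $n q_i(1-q_i)/(q_i^2+(1-q_i)^2)$, so the per-coordinate stationary Distortion is $g(q_i) = (1-q_i)/(q_i^2 + (1-q_i)^2)$. A short calculus argument on $q \in (0, 1/2]$ shows that $g$ is maximized at $q^* = 1 - 1/\sqrt{2}$ with value $g(q^*) = (\sqrt{2}+1)/2 \approx 1.2071 < 1.208$. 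Since this bound holds coordinate-wise, summing proves the stationary bound $\mathbb{E}[SC(a^\infty)] \le 1.208 \cdot SC(a^*)$.

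For the transient error, $|\pi_1^t - \pi_1| \le \lambda_i^t$, and a direct computation gives $\mathbb{E}[\text{cost}_i^t] - \mathbb{E}[\text{cost}_i^\infty] = n(1-2q_i)(\pi_1^t - \pi_1)$, so the per-coordinate deviation is at most $n\lambda_i^t$. The crucial manipulation is $\lambda_i = 2q_i(1-q_i) \le 2q_i$, which gives $\lambda_i^t \le 2q_i \cdot (1/2)^{t-1} = 4q_i/2^t$; summing yields $|\mathbb{E}[SC(a^t)] - \mathbb{E}[SC(a^\infty)]| \le 4 \cdot 2^{-t} \cdot SC(a^*)$. Combining with the stationary bound and allowing a small slack in constants yields $\mathbb{E}[SC(a^t)]/SC(a^*) \le 1.208 + 6/2^t$, independent of $n$, $d$, and the starting alternative $a^1$.

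The main obstacle is ensuring that the transient bound scales \emph{multiplicatively} with $SC(a^*)$ rather than with the ambient dimension $d$. A naive use of $\lambda_i \le 1/2$ gives absolute error of order $nd/2^t$, which does not translate into a multiplicative Distortion guarantee when $SC(a^*) \ll nd$. The refinement that $\lambda_i = O(q_i)$ for small $q_i$ is what makes the total transient error proportional to $\sum_i n q_i = SC(a^*)$, yielding a clean multiplicative bound. The secondary technical point is the calculus optimization of $g$, whose interior maximizer $1 - 1/\sqrt{2}$ is a root of the quadratic $2q^2 - 4q + 1 = 0$ and whose value $(\sqrt{2}+1)/2$ produces the $1.208$ constant appearing in the statement.
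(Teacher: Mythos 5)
Your proposal is correct and follows essentially the same route as the paper: reduce to the hypercube via Corollary~\ref{lemma:hypercube}, analyze the per-coordinate two-state Markov chain whose stationary marginal gives the ratio $\max_{q\le 1/2}(1-q)/(q^2+(1-q)^2)=(\sqrt{2}+1)/2<1.208$, and control the transient error through the quantity $2q_i(1-q_i)$. The only (cosmetic) difference is in the convergence bookkeeping --- you bound the additive error via the second eigenvalue and the refinement $\lambda_i\le 2q_i$, whereas the paper uses a coupling argument and demands total variation distance $\epsilon f_k^2/(f_k^2+(1-f_k)^2)$; both hinge on the same per-coordinate quantity and yield the same constant in the $6/2^t$ term.
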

\begin{proof}
By Corollary~\ref{lemma:hypercube}, we can assume the decision space is a $D$-dimensional hypercube $Q$ so that decision points (and thus bliss points) are vectors in $\{0,1\}^D$. 
For every dimension $k$, let $f_k$ be the fraction of agents whose bliss point has a 1 in the $k$th dimension, and let $p_{u,k}$ be the 0 or 1 bit in the $k$th dimension of the bliss point $p_u$ for agent $u$. Let $a^* \in \{0,1\}^D$ be the minimum social cost decision point, i.e., $a^* := \mbox{argmin}_{a \in Q} SC(a)$. Clearly, $a^*_k$ is 1 if $f_k > 1/2$ and 0 otherwise [assume w.l.o.g. that $a^*_k = 0$ for $f_k = 1/2$], so for every dimension $k$, the total distance to $a^*_k$, summed over $\mathcal{N}$ is:
$$\sum_{u \in \mathcal{N}} |a^*_k - p_{u,k}| = |\mathcal{N}|\mbox{min}\{f_k,\, 1-f_k\}$$
    
Now, note that sequential deliberation defines a Markov chain on $Q$.  The state in a given step is just $a^t$, and the randomness is in the random draw of the two agents.  Let $\mathcal{H}^*$ be the stationary distribution of the Markov chain.  Then we can write 
$$\lim_{t \rightarrow \infty} \mathbb{E}[\mbox{Distortion}(a^t)] = \mathbb{E}_{a \in \mathcal{H}^*}\left[\mbox{Distortion}(a)\right]$$  
    
To write down the transition probabilities, we assume this random draw is two independent uniform random draws from $\mathcal{N}$, with replacement.  We also note that Lemma~\ref{lemma:median} implies that on $Q$, sequential deliberation will pick the median in every step, \textit{i.e.}, given a disagreement outcome $a^t$ and two randomly drawn agents with bliss points $p_u, p_v$, the new decision point will be $o^t = \mbox{median}(p_u,p_v,a^t)$.  On a hypercube, the median of three points is just the dimension-wise majority.  Thus, we get a $2$-state Markov chain in each dimension $k$, with transition probabilities 
$$
        \Pr[o_k^t = 1 | a_k^t = 1] = f_k^2 + 2f_k(1-f_k) \; \mbox{and} \; \Pr[o_k^t = 1 | a_k^t = 0] = f_k^2
$$
Let $\mathcal{P}_k^*  = \lim_{t \rightarrow \infty} \Pr[o_k^t = 1]$, and let $\mathcal{H}^*_k$ denote this stationary distribution for the corresponding $2$-state Markov chain. Then, 
$$\mathcal{P}^*_k = \left(2f_k - f_k^2\right)\mathcal{P}^*_k + \left( f_k^2 \right) \left(1 - \mathcal{P}^*_k\right) \qquad \Rightarrow \qquad
        \mathcal{P}^*_k = \frac{f_k^2}{1+2f_k^2-2f_k}
$$
  By linearity of expectation, the total expected distance for every dimension $k$, summed over $u \in \mathcal{N}$ to the final outcome is given by
    \begin{equation*}
    \begin{split}
        \mathbb{E}_{a_k \in \mathcal{H}^*_k}\left[\sum_{u \in \mathcal{N}} |a_k - p_{u,k}|\right] & = |\mathcal{N}|\left( \left(\frac{f_k^2}{f_k^2 + (1-f_k)^2}\right)(1-f_k) + \left(1-\frac{f_k^2}{f_k^2 + (1-f_k)^2}\right)f_k \right) \\
        & = |\mathcal{N}|\left(\frac{f_k(1-f_k)}{f_k^2 + (1-f_k)^2} \right)
    \end{split}
    \end{equation*}
    Without loss of generality, let $f_k \in [0,1/2]$ so that for dimension $k$, the total distance to $a^*_k$ is $|\mathcal{N}|f_k$.  Then the ratio of the expected total distance to $\mathcal{H}^*_k$ to the total distance to $a^*_k$ is at most:
    \begin{equation*}
        \frac{\mathbb{E}_{a_k \in \mathcal{H}^*_k}\left[ \sum_{u \in \mathcal{N}} |p_{u,k} - a_k|\right]}{\sum_{u \in \mathcal{N}} |p_{u,k} - a^*_k|} \leq \mbox{max}_{f_k \in [0,1/2]} \frac{1-f_k}{f_k^2 + (1-f_k)^2} \le 1.208
    \end{equation*}
Since the above bound holds in each dimension of the hypercube, we can combine them as:
    \begin{equation*}
        \mathbb{E}_{a \in \mathcal{H}^*}\left[\mbox{Distortion}(a)\right] = \frac{\sum_{k=1}^D \mathbb{E}_{a_k \in \mathcal{H}^*_k}\left[ \sum_{u \in \mathcal{N}} |p_{u,k} - a_k|\right]}{\sum_{k=1}^D \sum_{u \in \mathcal{N}} |p_{u,k} - a^*_k|} \leq 1.208
    \end{equation*}
 
 \paragraph{Convergence Rate.}   Now that we have bounded the Distortion of the stationary distribution, we need to consider the convergence rate.  We will not bound the mixing time of the overall Markov chain.  Rather, note that in the preceding analysis, we only used the \textit{marginal} probabilities $\mathcal{H}_k^*$ for every dimension $k$.  Furthermore, the Markov chain defined by sequential deliberation need not walk along edges on $Q$, so we can consider separately the convergence of the chain to the stationary marginal in each dimension.  

After $t$ steps, let $P_{kt} = \Pr[ o_k^t = 1]$ and let $\mathcal{H}^t_k$ denote this distribution.  Assume $f_k \in [0,1/2]$. If the total variation distance\footnote{total variation distance is particularly simple for these distributions with support of just two points: $TVD(\mathcal{H}^t_k, \mathcal{H}^*_k) = |P_{kt} - P^*_k|$} between $\mathcal{H}^t_k$ and $\mathcal{H}^*_k$ is $\epsilon\frac{f_k^2}{f_k^2 + (1-f_k)^2}$, then it is easy to check that the expected total distance to $\mathcal{H}^t_k$ is within a $(1+\epsilon)$ factor of the expected distance to $\mathcal{H}^*_k$, which implies a Distortion of at most $1.208 + \epsilon$ in that dimension. We therefore bound how many steps it takes to achieve total variation distance $\epsilon\frac{f_k^2}{f_k^2 + (1-f_k)^2}$ in any dimension $k$; if this bound holds uniformly for all dimensions $k$, this would imply the overall Distortion is at most $1.208 + \epsilon$, completing the proof.

 For any dimension $k$, two executions of the $2$-state Markov chain along that dimension couple if the agents picked in a time step have the same value in that dimension. At any step, this happens with probability at least $\left(f_k^2 +  (1 - f_k)^2 \right)$. Therefore, the probability that the chains have not coupled in $t$ steps is at most
 $$ \left( 1 - \left(f_k^2 +  (1 - f_k)^2 \right)\right)^t = \left( 2f_k(1-f_k) \right)^t $$
 We therefore need $T$ large enough so that
\begin{eqnarray*}
  \left( 2f_k(1-f_k) \right)^T  & \le & \epsilon\frac{f_k^2}{f_k^2 + (1-f_k)^2}  \\
 \Rightarrow \ T & = & \max_{f_k \in [0,1/2]} \left( \frac{\log \frac{1}{\epsilon}}{\log \frac{1}{2f_k(1-f_k)}} + \frac{\log \frac{1}{f_k^2} + \log(f_k^2 + (1-f_k)^2)}{ \log \frac{1}{2f_k(1-f_k)}} \right) \\
 \Rightarrow \  T & \le & \log_2 \frac{1}{\epsilon} + 2.575
\end{eqnarray*}
 Since this bound of $T$ holds uniformly for all dimensions, this directly implies the theorem. 
\end{proof}



\subsection{Lower Bounds on Distortion}
\label{sec:lb}
We will now show that the Distortion bounds of sequential deliberation are significant, meaning that mechanisms from simpler classes are constrained to have higher Distortion values.  We present a sequence of lower bounds for social choice mechanisms that use increasingly rich information about the space of alternatives on a  median graph $G = (\S,E)$ with a set of agents $\mathcal{N}$ with bliss points $V_{\mathcal{N}} \subseteq \S$. 

We first consider mechanisms that are constrained to choose outcomes in $V_{\mathcal{N}}$. For instance, this captures the random dictatorship that chooses the bliss point of a random agent as the final outcome. It shows that the compromise alternatives found by deliberation do play a role in reducing Distortion. 

\begin{lemma}
\label{lb:random}
Any mechanism constrained to choose outcomes in $V_{\mathcal{N}}$ has Distortion at least 2.  
\end{lemma}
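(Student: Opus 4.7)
The plan is to exhibit a family of median-graph instances on which every mechanism restricted to output a bliss point incurs expected social cost at least $2 - o(1)$ times the optimum. Since trees are median graphs, I would use the simplest possible tree example: a star $G = (\S, E)$ with center $c$ and $n$ leaves $\ell_1, \dots, \ell_n$, where $\S = \{c, \ell_1, \dots, \ell_n\}$ and all edge lengths are $1$. Place one agent at each leaf, so $V_{\mathcal{N}} = \{\ell_1, \dots, \ell_n\}$ and $c \notin V_{\mathcal{N}}$.

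Next I would compute the two relevant social costs directly from the shortest-path distances in the star. The center satisfies $SC(c) = \sum_{i=1}^n d(\ell_i, c) = n$, while each leaf satisfies
\[
SC(\ell_j) = \sum_{i \neq j} d(\ell_i, \ell_j) + d(\ell_j, \ell_j) = 2(n-1).
\]
In particular, the generalized median $a^*$ has $SC(a^*) \leq SC(c) = n$ (indeed $a^* = c$).

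Now consider any mechanism $M$ whose output distribution is supported on $V_{\mathcal{N}}$. Since every leaf has identical social cost $2(n-1)$, any distribution on $V_{\mathcal{N}}$ yields expected social cost exactly $2(n-1)$. Hence
\[
\mathbb{E}[\mathrm{Distortion}(M)] \;\geq\; \frac{2(n-1)}{n} \;=\; 2 - \frac{2}{n},
\]
which tends to $2$ as $n \to \infty$; taking $n$ arbitrarily large establishes the lower bound of $2$.

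I expect no serious obstacle: the main subtlety is merely verifying that the construction lives in the lemma's model (the star is a tree and hence a median graph, the disutilities are shortest-path distances, and the optimum $c$ is allowed to lie outside $V_{\mathcal{N}}$ since the mechanism's constraint is only on the output). The argument is essentially the folklore lower bound for ``choose-a-voter'' schemes, adapted here to emphasize that the compromise points discovered via bargaining (in particular, the center $c$) are genuinely needed to beat Distortion $2$ on median graphs.
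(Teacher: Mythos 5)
Your proof is correct and is essentially identical to the paper's: both use the $|\mathcal{N}|$-star with agents at the leaves, compute $SC(c)=n$ versus $SC(\ell_j)=2(n-1)$, and let $n\to\infty$. No differences worth noting.
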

\begin{proof}
    It is easy to see that the $k$-star graph (the graph with a central vertex connected to $k$ other vertices none of which have edges between themselves) is a median graph.  Consider an $|\mathcal{N}|$-star graph where $V_{\mathcal{N}}$ are the non central vertices; that is, each and every agent has a unique bliss point on the periphery of the star.  Then any mechanism constrained to choose outcomes in $V_{\mathcal{N}}$ must choose one of these vertices on the periphery of the star.  The social cost of such a point is $(|\mathcal{N}|-1) \times 2$, whereas the social cost of the optimal central vertex is clearly just $|\mathcal{N}|$.  The Distortion goes to $2$ as $|\mathcal{N}|$ grows large.
    \end{proof}

We draw more contrasts between sequential deliberation and random dictatorship in appendix~\ref{sec:random}. In particular, we show that sequential deliberation dominates random dictatorship on every instance for median graphs, and converges to a Distortion of one for nearly unanimous instances, unlike random dictatorship.  We next consider mechanisms that are restricted to choosing the median of the bliss points of some three agents in $\N$. In particular, this captures sequential deliberation run for $T=1$ steps, as well as mechanisms that generalize dictatorship to an oligarchy composed of at most $3$ agents. This shows that iteratively refining the bargaining outcome has better Distortion than performing only one iteration. 


\begin{lemma}
\label{lemma:oligarch}
Any mechanism constrained to choose outcomes in $V_{\mathcal{N}}$ or a median of three points in $V_{\mathcal{N}}$ must have Distortion at least 1.316.   
\end{lemma}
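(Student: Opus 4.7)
The plan is to exhibit an instance on a hypercube-structured median graph where every admissible output of the mechanism yields distortion at least $1.316$. By Lemma~\ref{lemma:embed} it suffices to work inside the hypercube $Q=\{0,1\}^D$, and recall that in $Q$ the median of three vertices is simply their coordinate-wise majority, so both options available to the mechanism (a bliss point, or the coordinate-wise majority of three bliss points) are vertices of $Q$.

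First I would derive a clean formula for social cost on a symmetric instance where the fraction of agents with value $1$ in each coordinate is $f$. A direct calculation yields $SC(v)=N[Df+|v|(1-2f)]$ for any $v\in Q$. Taking $f<1/2$, the generalized median is $(0,\ldots,0)$ with $SC=NDf$, and so the distortion of $v$ equals $1+|v|(1-2f)/(Df)$, a monotone function of the Hamming weight. A bliss point has weight about $fD$, giving distortion $\approx 2(1-f)$. The median of three bliss points is $1$ in coordinate $k$ iff at least two of the three agents are $1$ there, so its weight is about $gD$ with $g=3f^2-2f^3$, giving distortion $\approx 1+f(3-2f)(1-2f)$.

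Next I would tune $f$ to balance these two distortions. Since the mechanism selects the better of its two options, we seek $f$ maximizing $\min\{2(1-f),\,1+f(3-2f)(1-2f)\}$. Standard calculus on $h(f)=1+3f-8f^2+4f^3$ gives $h'(f)=3-16f+12f^2=0$ at $f^\ast=(4-\sqrt{7})/6\approx 0.226$, where $h(f^\ast)=(17+7\sqrt{7})/27>1.316$ is already strictly less than the bliss bound $2(1-f^\ast)\approx 1.548$. Hence at $f^\ast$ the mechanism's best option still has distortion at least $1.316$.

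The main technical obstacle is arranging an instance for which \emph{every} triple (not merely a typical one) produces a median of weight close to $g^\ast D$, and \emph{every} agent has weight close to $f^\ast D$; otherwise the mechanism could exploit an outlier. I would handle this by the probabilistic method: place $N$ agents, with $N$ polynomial in $D$, i.i.d.\ in $\{0,1\}^D$ with each coordinate independently $1$ with probability $f^\ast$. Chernoff bounds give exponential-in-$D$ control over deviations of any single agent's weight and of any single triple's median weight from their respective targets, and a union bound over the $N$ agents and the $\binom{N}{3}$ triples keeps the overall failure probability $o(1)$ as $D\to\infty$. On any realization in the typical set, both bounds hold simultaneously, establishing the existence of an instance with distortion at least $(17+7\sqrt{7})/27-o(1)$, which exceeds $1.316$ for $D$ large enough.
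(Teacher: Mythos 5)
Your proposal is correct and follows essentially the same route as the paper: a random instance on the hypercube with i.i.d.\ Bernoulli($p$) coordinates, the majority-of-three weight $3p^2-2p^3$ yielding the distortion expression $1+3p-8p^2+4p^3$, optimization at $p=(4-\sqrt{7})/6$, and concentration plus a union bound over all triples so that every admissible output is simultaneously controlled. The only differences are cosmetic: you additionally check the bliss-point branch explicitly and write out the closed form $(17+7\sqrt{7})/27\approx 1.3156$ of the paper's constant.
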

\begin{proof}
Let $G$ be a median graph; in particular let $G$ be the $D$-dimensional hypercube $\{0,1\}^D$.  For every dimension, an agent has a 1 in that dimension of their bliss point independently with probability $p$. In expectation $p |\mathcal{N}|$ agents' bliss points have a 1 in any given dimension.  We assume  $0 < p < 1/2$ is an absolute constant.  For $a^*$ being the all $0$'s vector,  $\mathbb{E}[SC(a^*)] = D |\mathcal{N}| p$,    where  the randomness  is in the construction.  Now, suppose $D = \mbox{polylog}(|\mathcal{N}|)$.  Then, for any $\beta < 1$, with probability at least $1 - \frac{1}{\mbox{poly}(|\N|)}$ every three points in $V_{\mathcal{N}}$ has at least $\beta (3p^2 - 2p^3) D$ ones. By union bounds, for some $\alpha \in ( \beta,1)$, the social cost of any median of three points in $V_{\mathcal{N}}$ is at least:
\begin{equation}
    \begin{split}
        \mathbb{E}[SC(a)]  & \geq D |\mathcal{N}| \alpha \left[  (3p^2 - 2p^3) (1-p) + p (1-3p^2+2p^3)  \right] \\
        & = D |\mathcal{N}| \alpha \left( 4p^4 - 8p^3 +3p^2 + p \right)
    \end{split}
\end{equation}  
    where again, the randomness is in the construction.  Then there is nonzero probability that
  \begin{equation}
        \label{eq:Distortion}
           \frac{ \mathbb{E}[SC(a)]}{\mathbb{E}[SC(a^*)]}  \geq \alpha \left( 4p^3 - 8p^2 +3p + 1 \right)
        \end{equation} 
    If we choose the argmax of Equation~\ref{eq:Distortion}, we get nonzero probability over the construction that the Distortion is at least $(1.316) \alpha$.  Letting $\alpha$ grow close to $1$ and noting that the nonzero probability over the construction implies the existence of one such instance completes the argument.    
\end{proof}

We finally consider a class of mechanisms that includes sequential deliberation as a special case. We show that any mechanism in this class cannot have Distortion arbitrarily close to $1$. This also shows that sequential deliberation is close to best possible in this class. 


\begin{lemma}
\label{lemma:sp}
Any mechanism constrained to choose outcomes on shortest paths between pairs of outcomes in $V_{\mathcal{N}}$ must have Distortion at least $9/8 = 1.125$.
\end{lemma}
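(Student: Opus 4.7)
The plan is to adapt the random-hypercube construction from the proof of Lemma~\ref{lemma:oligarch}. I will take the decision space to be the $D$-dimensional hypercube $\{0,1\}^D$ (which is a median graph) with $D$ sufficiently large (say $D = \mathrm{poly}(N)$ where $N := |\mathcal{N}|$), and draw each agent $u$'s bliss point $p_u$ coordinate-wise i.i.d.\ Bernoulli$(p)$ for a constant $p \in (0, 1/2)$ to be optimized. Coordinate-wise Chernoff bounds ensure that with high probability the majority bit is $0$ in every dimension, so $a^* = \mathbf{0}$ is the optimum and $\mathbb{E}[SC(a^*)] = NDp \cdot (1 \pm o(1))$.

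I will then compute the best outcome available to the mechanism. On the hypercube a point $a$ lies on a shortest path from $p_u$ to $p_v$ if and only if $a_k = p_{u,k} = p_{v,k}$ whenever the two bliss points agree in dimension $k$, with $a_k$ unconstrained where they disagree; the cost-minimizing choice sets $a_k$ to the majority bit (which is $0$) on the free dimensions and is otherwise forced. Conditioning on $(p_{u,k},p_{v,k}) \in \{(0,0),(1,1),\text{differ}\}$ with probabilities $(1-p)^2$, $p^2$, and $2p(1-p)$, and averaging over the remaining $N-2$ agents' random bits at dimension $k$, the expected per-dimension social cost of the optimal SP-point works out to $Np(1-p)(1+2p) + O(1)$. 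Summing over $D$ dimensions and dividing by the optimum $\approx NDp$, the resulting ratio is $(1-p)(1+2p) = 1 + p(1-2p)$, which is maximized on $(0, 1/2)$ at $p = 1/4$, where its value equals $9/8$.

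Finally, since the mechanism may adaptively choose the pair whose SP it samples from, this $9/8$ bound must hold uniformly over all $\binom{N}{2}$ pairs. For fixed $(u,v)$, the best-SP cost decomposes as a sum of $D$ bounded (in $[0,N]$), dimension-wise independent random variables, so a Hoeffding bound gives concentration of each pair's cost within a $(1 \pm o(1))$ factor of its mean with failure probability $\exp(-\Omega(\mathrm{polylog}\,N))$. A union bound over pairs then shows that with positive probability \emph{every} pair's best SP-point has social cost at least $(9/8 - o(1)) \cdot SC(a^*)$, so some deterministic instance realizes this Distortion, and taking $N \to \infty$ yields the claimed lower bound. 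The hard part is precisely this uniformity across pairs: the per-pair expected ratio equals $9/8$, but a naive argument would allow the mechanism to find a lucky pair whose 1-bits are nearly disjoint (so its SP contains a near-optimal point), and it is the concentration of $S_{u,v} := \sum_k \mathbb{1}[p_{u,k}=p_{v,k}=1]$ around its mean $Dp^2$ — combined with the union bound — that rules this out.
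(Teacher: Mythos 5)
Your proposal is correct and takes essentially the same route as the paper: the paper's proof also mimics the random Bernoulli($p$) hypercube construction of Lemma~\ref{lemma:oligarch}, uses the same characterization of shortest-path points (forced where the pair agrees, set to the majority bit $0$ where they disagree) to get per-dimension cost $p^2(1-p)+(1-p^2)p$, hence ratio $1+p-2p^2$ maximized at $p=1/4$ with value $9/8$, and finishes with the same concentration-plus-union-bound argument over pairs. Your write-up is merely more explicit about why the uniformity over all $\binom{N}{2}$ pairs requires concentration of the agreement counts, a step the paper compresses into ``the rest of the argument follows as in Lemma~\ref{lemma:oligarch}.''
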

\begin{proof}
    The construction of the lower bound essentially mimics that of Lemma~\ref{lemma:oligarch}.  In this case however, we get that each point on a shortest path between two agents has at least $\alpha (p^2 \times (1-p) + (1-p^2) \times p) D$ ones, so 
$$  \mathbb{E}[SC(a)]  \geq D |\mathcal{N}| \alpha \left[ p^2 \times (1-p) + (1-p^2) \times p \right] =  D |\mathcal{N}| \alpha \left( -2p^3 + p^2 + p \right) 
$$
    So there is nonzero probability over the construction that
        \begin{equation*}
               \frac{ \mathbb{E}[SC(a)]}{\mathbb{E}[SC(a^*)]}  \geq \alpha \left( -2p^2 + p + 1 \right)
        \end{equation*} 
 where the maximum of $-2p^2 + p + 1$ over $p < 1/2$ is $9/8$ when $p = 1/4$.  The rest of the argument follows as in Lemma~\ref{lemma:oligarch}.
\end{proof}

The significance of the lower bound in Lemma~\ref{lemma:sp} should be emphasized: though there is always a Condorcet winner in median graphs, it need not be any agent's bliss point, nor does it need to be Pareto optimal for any pair of agents.  The somewhat surprising implication is that any local mechanism (in the sense that the mechanism chooses locally Pareto optimal points) is constrained away from finding the Condorcet winner.

\section{Properties of Sequential Deliberation}
\label{sec:properties}
In this section, we study some natural desirable properties for our mechanism: uniqueness of the stationary distribution of the Markov chain, ex-post Pareto-efficiency of the final outcome, and subgame perfect Nash equilibrium.

\paragraph{Uniqueness of the Stationary Distribution}
We first show that the Markov chain corresponding to sequential deliberation converges to a unique stationary distribution on the actual median graph, rather than just showing that the marginals and thus the expected distances from the perspectives of the agents converge.  


To prove uniqueness, it will be helpful to note that the Markov chain defined by sequential bargaining on $G$ by $\mathcal{N}$ only puts nonzero probability mass on points in the median closure $M(V_{\mathcal{N}})$ of $V_{\mathcal{N}}$ (see Definition~\ref{def:medianClosure} and Figure~\ref{figure:medianClosure} for an example).  This is the state space of the Markov chain, and there is a directed edge (i.e., nonzero transition probability) from $x$ to $v$ if there exist $u, w \in V_{\mathcal{N}}$ such that $v = M(x, u, w)$ (where $M(x, u, w)$ is the median of $x, u, w$ by a slight abuse of notation). 

\begin{definition}
\label{def:medianClosure}
    Let $V_{\mathcal{N}} \subseteq \S$ be the set of bliss points of agents in $\mathcal{N}$.  A point $v$ is in $M(V_{\mathcal{N}})$ if $v \in V_{\mathcal{N}}$ or if there exists some sequence $(x^1, y^1), (x^2, y^2), \hdots, (x^t, y^t)$ such that every point in every pair in the sequence is in $V_{\mathcal{N}}$ and there is some $z \in V_{\mathcal{N}}$ s.t.
    \[v = M(x^t, y^t, M(x^{t-1}, y^{t-1}, M( \hdots M(x^1, y^1, z) \hdots ))) \footnote{To interpret this sequence, note that $z$ represents the initial disagreement alternative drawn as the bliss point of a random agent. $(x^t, y^t)$ are the bliss points of the random agents drawn to bargain in every round.  $v$ therefore represents a feasible outcome of sequential deliberation.}\]
\end{definition}
 
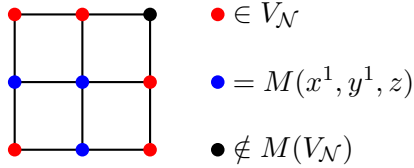
\begin{figure}[!h]
\centering
\begin{tikzpicture}[scale=0.9]

\draw[black, thick] (0,0) -- (0,2) -- (2,2) -- (2,0) -- (0,0);
\draw[black, thick] (1,0) -- (1,2);
\draw[black, thick] (0,1) -- (2,1);

\fill[red] (0,0) circle (0.1); \fill[blue] (0,1) circle (0.1); \fill[red] (0,2) circle (0.1);
\fill[blue] (1,0) circle (0.1); \fill[blue] (1,1) circle (0.1); \fill[red] (1,2) circle (0.1);
\fill[red] (2,0) circle (0.1); \fill[red] (2,1) circle (0.1); \fill[black] (2,2) circle (0.1);

\fill[gray] (1,2.5) circle (0.0);

\fill[red] (3,2) circle (0.1);
\fill[black] (3,2) circle (0.0) node[anchor=west] {$\, \in V_{\mathcal{N}}$};
\fill[blue] (3,1) circle (0.1); 
\fill[black] (3,1) circle (0.0) node[anchor=west] {$\, = M(x^1, y^1, z)$};
\fill[black] (3,0) circle (0.1); 
\fill[black] (3,0) circle (0.0) node[anchor=west] {$\, \notin M(V_{\mathcal{N}})$};

\end{tikzpicture}
\caption{The median closure of the red points is given by the red and blue points.}
\label{figure:medianClosure}
\end{figure}

\begin{theorem}
The Markov chain defined in Theorem~\ref{theorem:main} has a unique stationary distribution.
\label{thm:unique}
\end{theorem}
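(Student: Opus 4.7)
The plan is to show that the Markov chain defined by sequential deliberation is irreducible (in fact irreducible and aperiodic) on the finite state space $M(V_{\mathcal{N}})$, from which uniqueness of the stationary distribution follows immediately by standard Markov chain theory.

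First I would set up the transition structure. From any state $x \in M(V_{\mathcal{N}})$, the chain moves to $M(x, p_u, p_v)$, where $(u,v)$ is drawn uniformly from $\mathcal{N}^2$ with replacement. The key observation is the event $u = v$, which occurs with probability $1/|\mathcal{N}|$: since the median of a triple containing a repeated vertex is that repeated vertex (the three pairwise shortest paths all meet at $p_u$), we get $M(x, p_u, p_u) = p_u$ regardless of $x$. Hence from \emph{any} state $x \in M(V_{\mathcal{N}})$ and \emph{any} agent $u$, the chain transitions to $p_u$ in a single step with probability at least $1/|\mathcal{N}|^2$.

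Next I would use Definition~\ref{def:medianClosure} to go from bliss points out to arbitrary states of $M(V_{\mathcal{N}})$. By definition, for every $v \in M(V_{\mathcal{N}})$ there exist some $z \in V_{\mathcal{N}}$ and a sequence of pairs $(x^1,y^1), \ldots, (x^t,y^t) \in V_{\mathcal{N}}^2$ such that iteratively taking medians starting from $z$ yields $v$. Since each such pair is drawn with positive probability in a round, starting from $z$ the chain reaches $v$ in $t$ steps with positive probability. Composing with the previous paragraph, starting from \emph{any} state $x$: one step reaches $z$ (via the event $u = v = $ the agent with bliss point $z$), and $t$ further steps reach $v$. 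Therefore all pairs of states in $M(V_{\mathcal{N}})$ communicate, so the chain is irreducible.

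For aperiodicity, I would observe that each bliss point $p_u$ has a self-loop: picking $u$ twice yields $M(p_u, p_u, p_u) = p_u$, giving a transition $p_u \to p_u$ of positive probability. Aperiodicity of a single state in an irreducible chain implies aperiodicity of the whole chain. Since $M(V_{\mathcal{N}})$ is finite, an irreducible aperiodic Markov chain on it has a unique stationary distribution, which is the claim. I do not foresee a major obstacle; the only subtle point is recognizing that the ``trivial'' event $u=v$ provides a one-step bridge from any current state to any bliss point, which is exactly what unlocks the recursive definition of the median closure to establish irreducibility.
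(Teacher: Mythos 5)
Your proof is correct and follows essentially the same route as the paper's: both establish irreducibility by noting that the repeated-draw event gives every state an edge to every bliss point and then unrolling Definition~\ref{def:medianClosure} to reach any point of the median closure, and both get aperiodicity from self-loops (you place them only at bliss points and invoke that period is a class property, while the paper exhibits a self-loop at every state via $v = M(u,w,v)$, a negligible difference).
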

\begin{proof}

Let $G=(\S,E)$ be a  median graph, let $\mathcal{N}$ be a set of agents, and let $V_{\mathcal{N}} \subseteq \S$ be the set of bliss points of the agents in $\mathcal{N}$.  The Markov chain will have a unique stationary distribution if it is aperiodic and irreducible. 
To see that the chain is aperiodic, note that for any state $o^t = v \in M(V_{\mathcal{N}})$ of the Markov chain at time $t$, there is a nonzero probability that $o^{t+1} = v$.  This is obvious if $v \in V_{\mathcal{N}}$, as the agent corresponding to that bliss point might be drawn twice in round $t+1$ (remember, agents are drawn independently with replacement from $\mathcal{N}$).  If instead $v \notin V_{\mathcal{N}}$, we know by definition of $M(V_{\mathcal{N}})$ that there exist $u, w \in V_{\mathcal{N}}$ and $x \in M(V_{\mathcal{N}})$ such that $v = \mbox{median}(u,w,x)$.  But then $v = \mbox{median}(u,w,v)$. Clearly we can write $d(u,v) = d(u,v) + d(v,v)$ and $d(w,v) = d(w,v) + d(v,v)$, then the fact that $v = \mbox{median}(u,w,x)$ implies that $d(u,w) = d(u,v) + d(v,w)$.  Taken together, these equalities imply that $v$ is the median chosen in round $t+1$.  So in either case, there is some probability that $o^{t+1} = v$.  The period of every state is 1, and the chain is aperiodic.    
    
To argue that the chain is irreducible, suppose for a contradiction that there exist $t, v \in M(V_{\mathcal{N}})$ such that there is no path from $t$ to $v$.  Then $v \notin V_{\mathcal{N}}$, since all nodes in $V_{\mathcal{N}}$ clearly have an incoming edge from every other node in $M(V_{\mathcal{N}})$.  Then by definition there exists some sequence $(x^1, y^1), (x^2, y^2), \hdots, (x^t, y^t) \in V_{\mathcal{N}}$ and some $z \in V_{\mathcal{N}}$ such that 
    \[v = M(x^t, y^t, M(x^{t-1}, y^{t-1}, M( \hdots M(x^1, y^1, z) \hdots ))).\]
Since $z \in V_{\mathcal{N}}$, $z$ must have an incoming edge from $t$.  But then there is a path from $t$ to $v$.  This is a contradiction; the chain must be irreducible as well.  Both properties together show that the Markov chain has a unique stationary distribution.
\end{proof}

\paragraph{Pareto-Efficiency}
\label{sec:pe}
The outcome of sequential deliberation is ex-post Pareto-efficient on a median graph. In other words, in any realization of the random process, suppose the final outcome is $o$; then there is no other alternative $a$ such that $d(a,v) \le d(o,v)$ for every $v \in \mathcal{N}$, with at least one inequality being strict.  This is a weak notion of efficiency, but it is not trivial to show; while it is easy to see that a one shot bargaining mechanism using only bliss points is Pareto efficient by virtue of the Pareto efficiency of bargaining, sequential deliberation defines a potentially complicated Markov chain for which many of the outcomes need not be bliss points themselves.  


\begin{theorem}
\label{thm:PE}
    Sequential deliberation among a set $\mathcal{N}$ of agents, where the decision space $\mathcal{S}$ is a median graph and the initial disagreement point $a^1$ is the bliss point of some agent, yields an ex-post Pareto Efficient alternative.
\end{theorem}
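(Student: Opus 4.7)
The plan is to induct on $t$, showing that $a^t$ is ex-post Pareto efficient; the theorem is the case $t = T$. I will work inside the hypercube $Q$ via the isometric embedding $\phi : G \to Q$ of Lemma~\ref{lemma:embed}. Because Pareto efficiency is purely a distance property, it suffices to prove the a priori stronger claim that $\phi(a^t)$ is Pareto efficient in $Q$ with respect to $\{\phi(p_w)\}_{w \in \mathcal{N}}$: any dominator of $a^t$ in $G$ pushes forward through the isometric $\phi$ to a dominator of $\phi(a^t)$ in $Q$. The base case is immediate, since $a^1 = p_w$ for some agent $w$ gives $d(\phi(a^1), \phi(p_w)) = 0$, and no $b \ne \phi(a^1)$ can weakly dominate $\phi(a^1)$ with any strict improvement.

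For the inductive step, let $m = \mbox{median}(p_u, p_v, a^t)$; by Lemma~\ref{lemma:embed}, $\phi(m)$ is the coordinate-wise majority of $\phi(p_u), \phi(p_v), \phi(a^t)$. Suppose for contradiction that some $b \in Q$ dominates $\phi(m)$ for $\mathcal{N}$, strictly improving some agent $w^\star$. Let $S \subseteq [D]$ be the set of coordinates on which $b$ differs from $\phi(m)$. I will partition $S$ into ``common coordinates'' ($\phi(p_u)_k = \phi(p_v)_k$, so $\phi(m)_k$ takes that common value) and ``disagreement coordinates'' ($\phi(p_u)_k \ne \phi(p_v)_k$, so $\phi(m)_k = \phi(a^t)_k$ by the majority rule). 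On a common coordinate in $S$, flipping $\phi(m)_k$ to $b_k$ penalizes both $u$ and $v$ by one unit each; on a disagreement coordinate in $S$, the same flip helps one of $u, v$ by one unit and hurts the other by one unit. Summing the two inequalities $d(b, \phi(p_u)) \le d(\phi(m), \phi(p_u))$ and $d(b, \phi(p_v)) \le d(\phi(m), \phi(p_v))$ across $S$, the disagreement-coordinate contributions cancel, leaving $2 \cdot |S \cap \text{common}| \le 0$. Hence $S$ contains no common coordinates.

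It follows that $\phi(m)_k = \phi(a^t)_k$ for every $k \in S$. Define $c := \phi(a^t) \oplus S$, the hypercube point obtained by flipping exactly the coordinates in $S$. Because $c_k = 1 - \phi(a^t)_k = 1 - \phi(m)_k = b_k$ on $S$ while $c_k = \phi(a^t)_k = \phi(m)_k = b_k$ off $S$, the per-coordinate distance increments agree:
\[
d(c, \phi(p_w)) - d(\phi(a^t), \phi(p_w)) \;=\; d(b, \phi(p_w)) - d(\phi(m), \phi(p_w)) \qquad \forall\, w \in \mathcal{N}.
\]
The domination of $b$ over $\phi(m)$ thus transfers to a domination of $c$ over $\phi(a^t)$, including the strict improvement at $w^\star$, contradicting the inductive hypothesis.

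The main obstacle will be the coordinate-wise bookkeeping that forces $S$ to avoid common coordinates; once that is in hand, the structural identity $\phi(m) = \phi(a^t)$ on $S$ is exactly what makes the shift-by-$S$ construction valid, and the distance identity above then follows by a direct per-coordinate check. The degenerate case $m = a^t$ needs no separate argument, because any dominator of $\phi(m) = \phi(a^t)$ already contradicts the inductive hypothesis outright.
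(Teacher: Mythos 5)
Your proof is correct, but it takes a genuinely different route from the paper's. The paper argues over the entire realization at once, by backward induction on rounds: it defines $\mathcal{K}^t$ as the dimensions on which the round-$t$ pair agrees and all later pairs disagree, shows a hypothetical dominator $z$ must coincide with $o^T$ on $\mathcal{K}^T$, then on $\mathcal{K}^{T-1}$, and so on, and finally on the residual dimensions that still carry $a^1$ (a bliss point), forcing $z=o^T$. You instead prove a one-step preservation lemma and induct forward: if the disagreement point is Pareto efficient, so is the Nash bargaining outcome. Your key move --- showing the difference set $S$ between a putative dominator $b$ and the median contains no coordinates where $u,v$ agree (by summing the two agents' domination inequalities so the disagreement-coordinate contributions cancel), then transferring $b$ to a dominator $c = \phi(a^t)\oplus S$ of the disagreement point --- has no counterpart in the paper. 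Two things your version buys: it isolates Pareto efficiency as an invariant of the Markov chain (every intermediate $a^t$ is efficient, not just the terminal one), and it is modular in that the realization of future rounds never enters the argument. You were also right to strengthen the inductive hypothesis to Pareto efficiency with respect to the \emph{entire} hypercube $Q$ rather than just $\phi(\mathcal{S})$: the constructed point $c$ need not lie in $\phi(\mathcal{S})$, so the induction would not close without that strengthening, and the base case (a bliss point has distance zero to its agent) supports it. Both proofs use the same ingredients from Lemmas~\ref{lemma:median} and~\ref{lemma:embed} and both require $a^1$ to be an agent's bliss point.
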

\begin{proof}
Let $G=(\S,E)$ be a  median graph, let $\mathcal{N}$ be a set of agents, and let $V_{\mathcal{N}} \subseteq \S$ be the set of bliss points of the agents in $\mathcal{N}$. It follows from the proof of Corollary~\ref{lemma:hypercube} that without loss of generality we can suppose $\S$ is a hypercube embedding. Consider some realization $(x^1,y^1), (x^2,y^2), ..., (x^T, y^T)$ of sequential bargaining, where $(x^t,y^t) \in \S \times \S$ are the bliss points of the agents drawn to bargain in step $t$.  Let $o^T$ denote the final outcome.  For the sake of contradiction assume there is an alternative $z$ that Pareto-dominates $o^T$, i.e., $d(z,v) \le d(o^T,v)$ for each $v \in \mathcal{N}$, with at least one inequality being strict.

\vspace{-0.75cm}
\begin{figure}[htbp]
    \centering
    \includegraphics[width=0.5\textwidth]{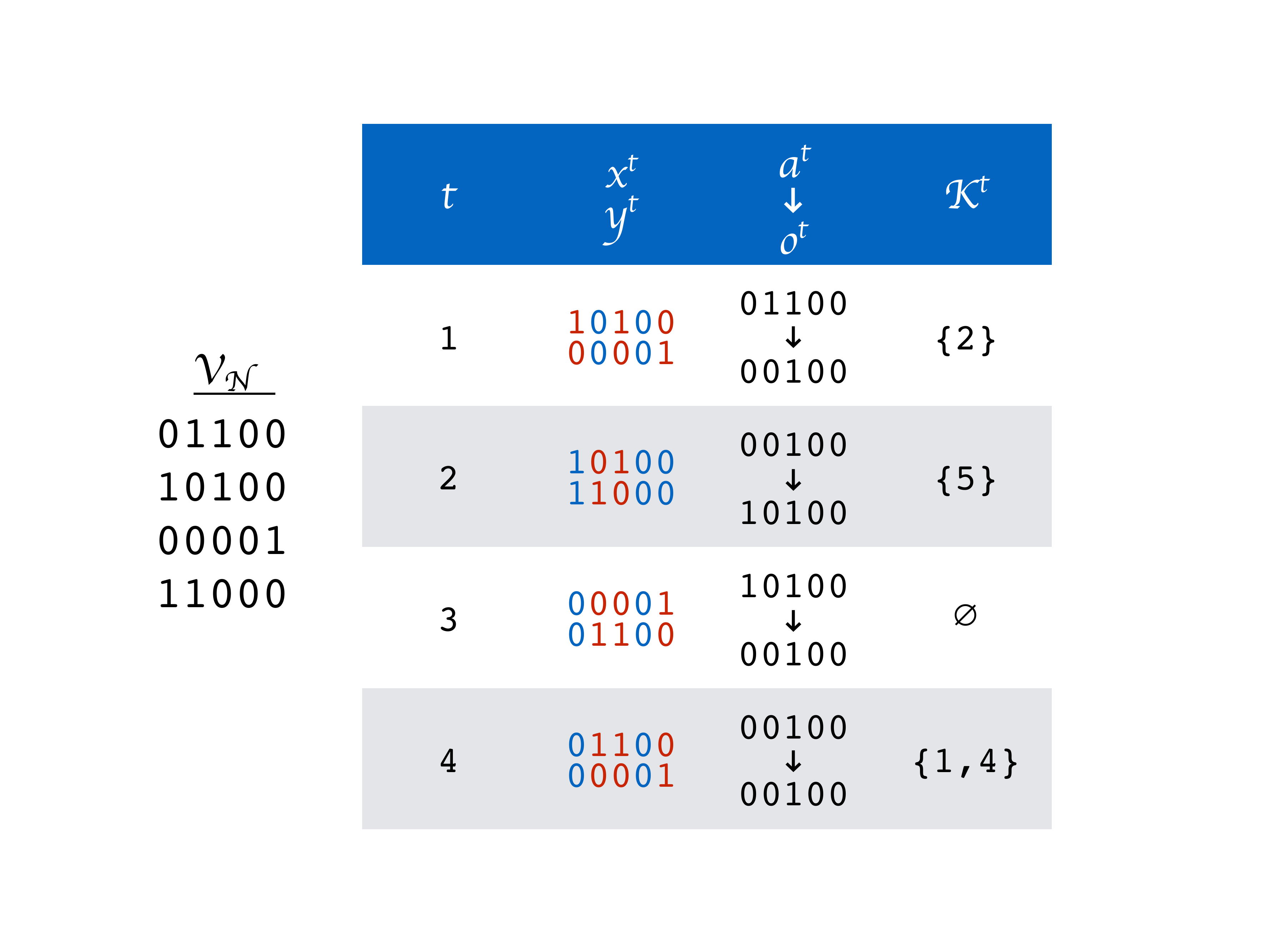}
\vspace{-1cm}
\caption{An example of sequential deliberation with $\mathcal{K}^t$ labeled. The dimensions are numbered $1,2,\ldots, 5$.}
    \label{figure:PE}
\end{figure}

Recall that on the hypercube, the median of three points has the particularly simple form of the dimension wise majority.  Let $\mathcal{K}^t$ be the set of dimensions of the hypercube $\S$ that are ``decided'' by the agents in round $t$ in the sense that these agents agree on that dimension (and can thus ignore the outside alternative in that dimension) and all future agents disagree on that dimension (and thus keep the value decided by bargaining in round $t$).  Formally, $\mathcal{K}^t = \left\{k: \, x^t_k = y^t_k \mbox{ and } \forall t' > t, x^{t'}_k \neq y^{t'}_k \right\}$.  Then by the majority property of the median on the hypercube, for any dimension $k$ such that $k \in \mathcal{K}^t$ for some $t \in \{1, \hdots, T\}$, it must be that $o^T_k = x^t_k$.  An example is shown in Figure~\ref{figure:PE}.  

Consider the final round $T$.  It must be that $\forall k \in \mathcal{K}^T, z_k = o^T_k$.  If this were not the case, $z$ would not be pairwise efficient (i.e., on a shortest path from $x^T$ to $y^T$), whereas $o^T$ is pairwise efficient by definition of the median, so one of the agents in round $T$ would strictly prefer $o^T$ to $z$, violating the dominance of $z$ over $o^T$. 

Next, consider round $T-1$.  Partition the dimensions of $\S$ into $\mathcal{K}^T, \mathcal{K}^{T-1}$ and all others.  Suppose for a contradiction that $\exists k \in \mathcal{K}^{T-1}$ such that $z_k \neq o^T_k$, where $x^{T-1}_k = y^{T-1} = o^T_k$ by definition.  Then the agents in round $T-1$ must strictly prefer $o^T$ to $z$ on the dimensions in $\mathcal{K}^{T-1}$.  But for $k \in \mathcal{K}^T$, we know that $z_k = o^T_k$, so the agents are indifferent between $z$ and $o^T$ on the dimensions in $\mathcal{K}^T$.  Furthermore, for $k \notin \mathcal{K}^T \cup \mathcal{K}^{T-1}, x^{T-1}_k \neq y^{T-1}_k$, so at least one of the two agents at least weakly prefers $o^T$ to $z$ on the remaining dimensions.  But then at least one agent must strictly prefer $o^T$ to $z$, contradicting the dominance of $z$ over $o^T$.

Repeating this argument yields that for all $k \in \mathcal{K}^1 \cup \mathcal{K}^2 \cup \hdots \cup \mathcal{K}^T$, $z_k = o^T_k$.  For all other dimensions, $o^T_k$ takes on the value $a^1$, which is the bliss point of some agent.  Since that agent must weakly prefer $z$ to $o^T$, $z$ must also take the value of her bliss point on these remaining dimensions.  But then $z = o^T$, so $z$ does not Pareto dominate $o^T$, a contradiction.
\end{proof}

\paragraph{Truthfulness of Extensive Forms} 
\label{sec:truthful}
Finally, we show that sequential deliberation has truth-telling as a sub-game perfect Nash equilibrium in its induced extensive form game.  Towards this end, we formalize a given round of bargaining as a 2-person non-cooperative game between two players who can choose as a strategy to report any point $v$ on a median graph; the resulting outcome is the median of the two strategy points chosen by the players and the disagreement alternative presented.  The payoffs to the players are just the utilities already defined; i.e., the player wishes to minimize the distance from their true bliss point to the outcome point.  Call this game the non-cooperative bargaining game (NCBG).


The extensive form game tree  defined by non-cooperative bargaining consists of $2T$ alternating levels: Nature draws two agents at random, then the two agents play NCBG and the outcome becomes the disagreement alternative for the next NCBG.  The leaves of the tree are  a set of points in the median graph; agents want to minimize their expected distance to the final outcome.

\begin{theorem}
\label{thm:SPNE}
Sequential NCBG on a median graph has a sub-game perfect Nash equilibrium where every agent truthfully reports their bliss point at all rounds of bargaining. 
\end{theorem}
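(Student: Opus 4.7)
The plan is to embed the median graph into a hypercube via the isometry $\phi$ from Lemma~\ref{lemma:embed} and then carry out backward induction on the rounds. Strategies in $\S$ correspond bijectively to points in $\phi(\S) \subseteq Q = \{0,1\}^D$; pairwise distances and utilities are preserved; and medians on $Q$ are coordinate-wise majorities. So the induced extensive form game is equivalent to sequential NCBG on $Q$ with strategies restricted to $\phi(\S)$. Because distance on $Q$ is Hamming, each agent's expected disutility decomposes additively across coordinates. This decomposition is the crucial structural reduction: the best-response problem can be solved coordinate by coordinate.

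Fix a coordinate $k$ and let $f_k$ be the fraction of agents with bit $1$ in that coordinate. Under the conjectured truthful profile the marginal sequence $\{o^t_k\}$ evolves as the two-state Markov chain from the proof of Theorem~\ref{theorem:main}, with transitions $\Pr[o^{t+1}_k = 1 \mid o^t_k = 1] = 2f_k - f_k^2 \ge f_k^2 = \Pr[o^{t+1}_k = 1 \mid o^t_k = 0]$. A standard monotone coupling then shows that the chain started from $o^t_k = 1$ stochastically dominates the chain started from $o^t_k = 0$ at every future time. Consequently, for every agent $u$ and every round $t$, $\Pr[o^T_k = p_{u,k} \mid o^t_k = p_{u,k}] \ge \Pr[o^T_k = p_{u,k} \mid o^t_k \ne p_{u,k}]$, so the probability that the final coordinate matches $u$'s bit is weakly greater when the intermediate outcome already matches it.

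Backward induction now proceeds as follows. Assume every agent drawn in rounds after $t$ reports truthfully, and consider a subgame at round $t$ with disagreement $a^t$ and drawn pair $(u,v)$ with $v$ reporting $p_v$. If $u$ reports $x \in \phi(\S)$, then $o^t_k = \mathrm{maj}(x_k, p_{v,k}, a^t_k)$, and by linearity of expectation $u$'s expected disutility equals $\sum_k \Pr[o^T_k \ne p_{u,k} \mid o^t_k]$. In coordinates with $p_{v,k} = a^t_k$, $o^t_k$ is independent of $x_k$ and $u$ cannot influence the outcome; in coordinates with $p_{v,k} \ne a^t_k$, $u$ is pivotal and setting $x_k = p_{u,k}$ forces $o^t_k = p_{u,k}$, which by the coupling bound weakly lowers the $k$th summand. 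Hence $x = p_u \in \phi(\S)$ is a feasible best response, and the base case $t = T$ is the same argument with distance in place of a conditional probability.

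The main obstacle is legitimizing the coordinate-by-coordinate analysis: the joint chain on $Q$ is not a product of independent chains, because the two agents drawn in any given round are shared across all coordinates, producing correlations in the joint distribution of $o^T$. Linearity of expectation resolves this, since only marginal coordinate distributions enter the expected Hamming distance, and the monotone coupling depends only on these marginals. Once this point is granted, the pivotal-voter step and the backward induction itself are routine.
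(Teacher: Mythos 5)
Your proof is correct, and it reaches the conclusion by a genuinely different route from the paper's. Both arguments pass to the hypercube embedding of Lemma~\ref{lemma:embed}, decompose disutility coordinate-wise, and isolate the same pivotal structure at the current round ($u$'s report matters in coordinate $k$ only when $p_{v,k} \neq a^t_k$, in which case it dictates $o^t_k$). The difference is in how the future is handled. The paper fixes an \emph{arbitrary realization} of all future pairs and observes that, in each coordinate, the final bit is either the common bit of some future agreeing pair (hence independent of $u$'s report) or exactly the bit $u$ reported; the truthful report is therefore pointwise, realization-by-realization, at least as close to $p_u$ in every coordinate. This is an ex-post dominance argument that needs no probabilistic machinery and does not depend on the distribution from which future pairs are drawn. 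You instead argue \emph{in expectation}: under truthful continuation the marginal in each coordinate is the two-state chain from Theorem~\ref{theorem:main}, which is stochastically monotone since $2f_k - f_k^2 \ge f_k^2$, so forcing $o^t_k = p_{u,k}$ weakly raises $\Pr[o^T_k = p_{u,k}]$. Your monotone coupling is really the probabilistic shadow of the paper's pointwise fact---the chain is monotone precisely because $\mathrm{maj}(b,x,y)$ is nondecreasing in $b$ for every fixed $(x,y)$---so the two proofs rest on the same structural observation, but yours leans on the i.i.d.\ sampling model to define the marginal chains, whereas the paper's comparison of $M_u^r$ and $M_z^r$ under a common fixed sequence yields a stronger (ex-post) form of optimality. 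Your explicit justification of the coordinate-by-coordinate reduction via linearity of expectation, despite cross-coordinate correlations induced by shared agent draws, is sound and is a point the paper leaves implicit.
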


\begin{proof}
The proof is by backward induction. Let $G=(\S,E)$ be a median graph. In the base case, consider the final round of bargaining between agents $u$ and $v$ with bliss points $p_u$ and $p_v$ and disagreement alternative $a$. The claim is that $u$ playing $p_u$ and $v$ playing $p_v$ is a Nash equilibrium. By Lemma~\ref{lemma:embed}, we can embed $G$ isometrically into a hypercube $Q$ as $\phi:G \rightarrow Q$ and consider the bargaining on this embedding. Then for any point $z$ that agent $u$ plays
\[ d(p_u, M(z,p_v,a)) = d(\phi(p_u), M(\phi(z), \phi(p_v), \phi(a))) \]
The median on the hypercube is just the bitwise majority, so if $u$ plays some $z$ where for some dimension $\phi_k(z) \neq \phi_k(p_u)$, it can only increase $u$'s distance to the median.  So playing $p_u$ is a best response.

For the inductive step, suppose $u$ is at an arbitrary subgame in the game tree with $t$ rounds left, including the current bargain in which $u$ must report a point, and assuming truthful play in all subsequent rounds. Let $\{(x^1,y^1), (x^2,y^2), ..., (x^t, y^t)\}$ represent $(x^1, y^1)$ as the outside alternative and other agent bliss point against which $u$ must bargain, $(x^2, y^2)$ as the bliss points of the agents drawn in the next round, and so on.  We want to show that it is a best response for agent $u$ to choose $p_u$, i.e., to truthfully represent her bliss point. Define
    \[M_u^t := M(x^t, y^t, M(x^{t-1}, y^{t-1}, M( \hdots M(x^1, y^1, p_u) \hdots )))\]
    where $M(\cdot)$ indicates the median, guaranteed to exist and be unique on the median graph.  Also, for any point $z$, similarly define
    \[ M_z^t := M(x^t, y^t, M(x^{t-1}, y^{t-1}, M( \hdots M(x^1, y^1, z) \hdots ))) \]

Suppose by contradiction that $p_u$ is not a best response for agent $u$, then there must exist $z \neq p_u$ and some $r > 0$ such that $d(p_u, M_u^r) > d(p_u, M_z^r)$. We embed $G$ isometrically into a hypercube $Q$ as $\phi:G \rightarrow Q$. Then by the isometry property, $d(\phi(p_u), \phi(M_u^r)) > d(\phi(p_u), \phi(M_z^r))$.  By the proof of Corollary~\ref{lemma:hypercube}, we can pretend the process occurs on the hypercube.

Consider some dimension $k$. If $\phi_k(x^t) = \phi_k(y^t)$ for some $t \le r$, then this point becomes the median in that dimension, so the median becomes independent of $\phi_k(w)$, where $w$ is the initial report of agent $u$. Till that time, the bargaining outcome in that dimension is the same as $\phi_k(w)$. In either case, for all times $t \le r$ and all $k$, we have:
$$ |\phi_k(p_u) - \phi_k(M_u^t) |  \le  |\phi_k(p_u) -  \phi_k(M_z^t)| $$ 
Summing this up over all dimensions, $d(\phi(p_u), \phi(M_u^t)) \le d(\phi(p_u), \phi(M_z^t))$, which is a contradiction. Therefore, $p_u$ was a best response for agent $u$\footnote{It is important to note that we are \textit{not} assuming that agent $u$ will not bargain again in the subgame; there are no restrictions on the values of $\{(x^1,y^1), (x^2,y^2), ..., (x^t, y^t)\}$.}. Therefore, every agent truthfully reporting their bliss points at all rounds is a subgame perfect Nash equilibrium of Sequential NCBG.
\end{proof}


\section{General Metric Spaces}
\label{sec:general}
We now work in the very general setting that the set $\S$ of alternatives are points in a finite metric space equipped with a distance function $d(\cdot)$ that is a metric. As before, we assume each agent $u \in \N$ has a bliss point $p_u \in \S$.  An agent's disutility for an alternative $a \in \S$ is simply $d(p_u,a)$. We first present results for the Distortion, and subsequently define the second moment, or Squared-Distortion. For both measures, we show that the upper bound for sequential deliberation is at most a constant regardless of the metric space. 


\begin{theorem}
\label{theorem:Distortion-general}
The Distortion of sequential deliberation is at most 3 when the space of alternatives and bliss points lies in some metric, and this bound is tight.
\end{theorem}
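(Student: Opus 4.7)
Plan. The upper bound rests on just two inputs: first, the fact recalled in Section~\ref{sec:second} that Nash bargaining in a general metric returns an outcome $o$ on some shortest path between the bargainers' bliss points, i.e.\ $d(p_u,o)+d(p_v,o)=d(p_u,p_v)$; second, the triangle inequality. For any third agent $w\in\N$, averaging the two triangle inequalities $d(p_w,o)\le d(p_w,p_u)+d(p_u,o)$ and $d(p_w,o)\le d(p_w,p_v)+d(p_v,o)$ gives
$$ d(p_w,o)\ \le\ \tfrac{1}{2}\bigl(d(p_w,p_u)+d(p_w,p_v)\bigr)+\tfrac{1}{2}\,d(p_u,p_v). $$
Crucially, the right-hand side does not involve the disagreement alternative $a$, so this estimate holds conditional on \emph{any} history of the sequential deliberation Markov chain in round $t$.

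Next, since $u^t,v^t$ are drawn i.i.d.\ uniformly from $\N$, linearity of expectation gives
$$ \mathbb{E}\bigl[d(p_w,o^t)\mid a^t\bigr]\ \le\ \mathbb{E}_u[d(p_w,p_u)]+\tfrac{1}{2}\,\mathbb{E}_{u,v}[d(p_u,p_v)]. $$
Routing through the generalized median $a^*$ via the triangle inequality yields $\mathbb{E}_u[d(p_w,p_u)]\le d(p_w,a^*)+SC(a^*)/|\N|$ and $\mathbb{E}_{u,v}[d(p_u,p_v)]\le 2\,SC(a^*)/|\N|$. Summing over $w\in\N$ gives $\mathbb{E}[SC(o^t)\mid a^t]\le SC(a^*)+2\,SC(a^*)=3\,SC(a^*)$. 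Since this bound holds pointwise in $a^t$, it holds in full expectation for every $t$, so the expected Distortion of sequential deliberation is at most $3$.

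For tightness, I would exhibit a sequence of metric instances on which the expected Distortion converges to $3$. The inequalities above become asymptotically tight precisely when (i) $a^*$ lies on a shortest path between almost every pair of bliss points, so that the triangle routings through $a^*$ are essentially equalities, and (ii) for the typical random bargaining pair $(u,v)$ and most third agents $w$, both $p_u$ and $p_v$ lie on a shortest path from $p_w$ to the Nash outcome $o$, which forces the averaged triangle inequality to be tight. A natural template is a highly asymmetric ``spiked'' metric---for example, a carefully weighted graph in which one agent sits far from a cluster along a direction distinct from the optimum, so that Nash bargaining between the outlier and any cluster member lands on an alternative whose social cost is close to $3\,SC(a^*)$, and the Markov chain keeps non-trivial stationary mass on such alternatives.

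The upper bound itself is essentially routine once one has the shortest-path structure of Nash bargaining and one permits oneself two triangle inequalities plus a draw-averaging. The delicate part is the tightness witness: most symmetric constructions (balanced two-cluster configurations, cycles, stars with uniform agents) equilibrate at near-optimal alternatives and produce Distortion close to $1$, so the instance must be arranged so that the Markov chain cannot easily migrate back to $a^*$ while simultaneously making the pairwise triangle inequalities almost equalities. Verifying that the stationary distribution of a non-symmetric chain places enough mass on the ``bad'' alternatives to drive the Distortion to $3$ in the limit is the main technical obstacle in this proof plan.
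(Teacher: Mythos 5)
Your upper bound argument is correct and is essentially the paper's proof in a slightly reshuffled order: the paper first routes $d(\B(i,j,a),p_k)$ through $a^*$ via the triangle inequality and then invokes Pareto efficiency of the pairwise outcome (equivalently, $d(p_i,o)+d(p_j,o)=d(p_i,p_j)\le Z_i+Z_j$), whereas you first use the shortest-path identity and then route each pairwise distance through $a^*$. Both arguments use exactly the same two ingredients, both are uniform over the disagreement alternative $a$, and both account $OPT + 2\,OPT = 3\,OPT$. So the first half is fine.

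The tightness half has a genuine gap: you never produce the instance, and the template you sketch --- a single outlier far from a cluster --- would not work. Since the two bargainers are drawn i.i.d.\ uniformly, with probability $1-O(1/|\N|)$ both lie in the cluster, the bargaining outcome lands near the cluster, and the expected Distortion tends to $1$, not $3$. The construction the paper uses is in fact \emph{symmetric} rather than spiked: take an $|\N|$-star with unit edge weights and one agent at each leaf (so $a^*$ is the center with social cost $|\N|$), and for \emph{every} pair of agents add a fresh vertex joined to both of their bliss points by edges of weight $1-\epsilon$. Whatever the disagreement alternative is, Nash bargaining between agents $i$ and $j$ selects their private compromise vertex, whose distance to every third agent is $3-\epsilon$, giving social cost about $3|\N|$. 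This also dissolves the difficulty you flag about the stationary distribution: no mixing analysis is needed because \emph{every} single round of bargaining, conditional on any history, outputs a vertex of social cost approaching $3\,OPT$. The lesson is that tightness requires a pairwise-indexed family of bad compromise points that each pair slightly prefers to everything else, not an asymmetric outlier.
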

\begin{proof}
Each agent $u \in \N$ has a bliss point $p_u \in \S$.  An agent's disutility for an alternative $a \in \S$ is simply $d(p_u,a)$.  Let $a^* \in \S$ be the social cost minimizer, i.e., the generalized median.  For convenience, let $Z_a = d(a, a^*)$ for $a \in \S$.  By a slight abuse of notation, let $Z_i = d(p_i, a^*)$ for $i \in \N$, i.e., the distance from agent $i$'s bliss point to $a^*$.

We will only use the assumption that $\B(u, v, a)$ finds a Pareto efficient point for $u$ and $v$, so rather than taking an expectation over the choice of the disagreement alternative, we take the worst case.  Let $a^*$ be the social cost minimizer with social cost $OPT$. We can write then write expected worst case social cost of a step of deliberation as:
\begin{equation*}
\begin{split}
    & \sum_{i \in \N} \sum_{j \in \N} \frac{1}{|\N|^2} \max_{a \in \S} \sum_{k \in \N} d(\B(i, j, a), p_k) \\
    & \leq \sum_{i,j \in \N} \frac{1}{|\N|^2} \max_{a \in \S} \sum_{k \in \N} d(\B(i, j, a), a^*) + Z_k = OPT + \sum_{i,j \in \N} \frac{1}{|\N|} \max_{a \in \S} \left( d(\B(i, j, a), a^*) \right) \mbox{ \big{[}Triangle inequality\big{]}}\\
    & \leq OPT + \sum_{i,j \in \N} \frac{1}{2|\N|} \max_{a \in \S} \left( d(\B(i, j, a), p_i) + Z_i + d(\B(i, j, a), p_j) + Z_j \right) \mbox{ \big{[}Triangle inequality\big{]}} \\
    & \leq OPT + \sum_{i,j \in \N} \frac{1}{2|\N|} \left(2Z_i + 2Z_j\right) \mbox{ \big{[}Pareto efficiency of $\B(\cdot)$\big{]}} \\
    & = 3 OPT
\end{split}
\end{equation*} 
Since this holds for the worst case choice of disagreement alternative, it holds over the whole sequential process.  The tight example is a weighted graph that can be constructed as follows: start with a star with a single agent's bliss point located at each leaf of the star (i.e., an $|N|$-star) where every edge in the star has weight (or distance) 1.  Now, for every pair of agents, add a vertex connected to the bliss points of both of the agents with weight $1-\epsilon$ for $\epsilon > 0$.  Then as $|N| \rightarrow \infty$, every round of Nash bargaining selects one of these pairwise vertices.  But the central vertex of the star has social cost $|N|$ whereas the pairwise vertices all have social cost approaching $3|N|$ as $\epsilon \rightarrow 0$.
\end{proof}

The bound of $3$ above is quite pessimistic. The metric space employed in the lower bound is contrived in the following sense: Every pair of agents has some unique (to that pair) alternative they very slightly prefer to the social optimum. For structured spaces, we expect the bound to be much better. We have already shown this for median spaces. In appendix~\ref{sec:budget}, we provide more evidence in this direction by considering a structured space motivated by budgeting applications that is not a median graph. For this space, we show that sequential deliberation has Distortion at most $4/3$.

\subsection{Second Moment of Social Cost}
\label{sec:second}
We now show that for any metric space, sequential deliberation has a crucial advantage in terms of the distribution of outcomes it produces.  For this, we consider the second moment, or the expected squared social cost.  Recall that the {\em social cost} of an alternative $a \in \S$ is given by $ SC(a) = \sum_{u \in \N} d(p_u,a)$. Let $a^* \in \S$ be the minimizer of social cost, {\em i.e.}, the {\em generalized median}. Then define:
 $$ \mbox{Squared-Distortion} = \frac{\mathbb{E}[(SC(a))^2]}{(SC(a^*))^2}$$
 where the expectation is over the set of outcomes $a$ produced by Sequential Deliberation.\footnote{The motivation for considering Squared-Distortion instead of the standard deviation is that the latter might prefer a more deterministic mechanism with a worse social cost, a problem that the Squared-Distortion avoids.}
  
We will show that sequential deliberation has Squared-Distortion upper bounded by a constant. This means the standard deviation in social cost of the distribution of outcomes is comparable to the optimal social cost. This has a practical implication: A policy designer can run sequential deliberation for a few steps, and be sure that the probability of observing an outcome that has $\gamma$ times the optimal social cost is at most $O(1/\gamma^2)$.   In contrast, we show that Random Dictatorship (choosing an agent uniformly at random and using her bliss point as the solution) has unbounded Squared-Distortion, which means its standard deviation in social cost cannot be bounded.  A different way of viewing our result is that deliberation between agents eliminates the outlier agent, and concentrates probability mass on central outcomes.

\medskip
For convenience, let $Z_a = d(a, a^*)$ for $a \in \S$.  By a slight abuse of notation, let $Z_i = d(p_i, a^*)$ for $i \in \N$, i.e., the distance from agent $i$'s bliss point to $a^*$. We will need the following technical lemma bounding these distances.  The lemma addresses the following question: given an arbitrary agent $u$, how far away can the outcome of a bargaining round with agents $i$ and $j$ and disagreement alternative $a$ be from $p_u$?  The answer is that it cannot be much further than the values of $Z_u$ and the smaller of $Z_i, Z_j, Z_a$.  The two min functions in the bound serve to eliminate outliers, and this is crucial for bounding the Squared-Distortion.  

\begin{lemma} For all $i,j,u \in \N$ and $a \in \S$ we have that
\label{lemma:bargain_to_u}
$$ d(\B(i, j, a), p_u) \le  Z_u + 2 \min{(Z_i,Z_j)} + \min{(Z_a, \max{(Z_i,Z_j)})} $$
\end{lemma}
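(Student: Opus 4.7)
The plan is to reduce the lemma to bounding $d(o, a^*)$, where $o := \B(i,j,a)$. By the triangle inequality, $d(o, p_u) \le d(o, a^*) + Z_u$, so it suffices to establish the cleaner inequality
$$d(o, a^*) \le 2\min(Z_i, Z_j) + \min\bigl(Z_a,\; \max(Z_i, Z_j)\bigr).$$
Once this is proved, adding $Z_u$ delivers the lemma.

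The strategy is to derive two complementary upper bounds on $d(o, a^*)$ and then take the minimum. The first bound comes from individual rationality of Nash bargaining: $o$ is at least as good as $a$ for each of agents $i$ and $j$, so $d(o, p_k) \le d(a, p_k)$ for $k \in \{i,j\}$. Chaining triangle inequalities, $d(o, a^*) \le d(o, p_k) + Z_k \le d(a, p_k) + Z_k \le Z_a + 2 Z_k$, and minimizing over $k \in \{i,j\}$ yields the first bound $d(o, a^*) \le Z_a + 2 \min(Z_i, Z_j)$.

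For the second bound, I would invoke the structural fact about Nash bargaining in general metric spaces stated earlier in Section~\ref{sec:second}: the bargaining outcome $o$ lies on a shortest path between $p_i$ and $p_j$, equivalently $d(o, p_i) + d(o, p_j) = d(p_i, p_j)$. This implies $d(o, p_k) \le d(p_i, p_j) \le Z_i + Z_j$ for either $k$, and so $d(o, a^*) \le d(o, p_k) + Z_k \le Z_i + Z_j + Z_k$. Minimizing over $k$ gives $d(o, a^*) \le \max(Z_i, Z_j) + 2 \min(Z_i, Z_j)$.

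Both upper bounds share the summand $2 \min(Z_i, Z_j)$, so taking their minimum factors as
$$d(o, a^*) \;\le\; 2 \min(Z_i, Z_j) + \min\bigl(Z_a,\; \max(Z_i, Z_j)\bigr),$$
which combined with the initial reduction completes the proof. The main obstacle — really the only nontrivial step — is having the shortest-path characterization of the Nash outcome on hand; this is where the Nash product structure is used, and without it one would only get the weaker $Z_a$-bound. Everything else is triangle inequality plus individual rationality of bargaining.
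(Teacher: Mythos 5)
Your proof is correct, and it takes a genuinely different (and arguably more modular) route than the paper's. The paper works through $p_i$ rather than $a^*$: assuming w.l.o.g.\ $Z_i \le Z_j$, it writes $d(\B(i,j,a),p_u) \le d(p_i,p_u) + d(\B(i,j,a),p_i)$ and then substitutes the \emph{exact} closed-form location of the Nash outcome on the $p_i$--$p_j$ shortest path, $d(\B(i,j,a),p_i) = \tfrac{1}{2}\bigl(d(p_i,p_j) + d(p_i,a) - d(p_j,a)\bigr)$; after triangle inequalities this leaves a residual term $\tfrac{1}{2}\bigl(Z_j + Z_a - d(p_j,a)\bigr)$, and the two arms of the $\min$ both come from bounding that single term via the two directions of the triangle inequality on $p_j, a, a^*$. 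You instead route everything through $d(o,a^*)$ and obtain the two arms of the $\min$ from two separate qualitative properties of bargaining: individual rationality yields $d(o,a^*) \le Z_a + 2\min(Z_i,Z_j)$, and Pareto efficiency (metric betweenness of $o$ on the $p_i$--$p_j$ path) yields $d(o,a^*) \le \max(Z_i,Z_j) + 2\min(Z_i,Z_j)$; the shared summand lets the minimum factor as claimed. What your version buys is generality and transparency: it never uses the precise Nash-product maximizer formula, only individual rationality (which is part of the bargaining definition) and the betweenness half of the shortest-path characterization, so the lemma would hold verbatim for any individually rational, pairwise Pareto-efficient bargaining function. What the paper's version buys is that it isolates exactly where the Nash solution's specific location is (or could be) exploited, which is consistent with how the same formula is quoted elsewhere in Section~\ref{sec:second}. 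Both arguments lean on the unproved assertion that the Nash outcome satisfies $d(o,p_i)+d(o,p_j)=d(p_i,p_j)$; you use strictly less of that assertion than the paper does.
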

\begin{proof}
Assume w.l.o.g. that $Z_i \leq Z_j$.  Then the lemma statement reduces to 
\begin{equation}
\label{equation:lemma_statement}
    d(\B(i, j, a), p_u) \le  Z_u + 2 Z_i + \min{(Z_a, Z_j)}
\end{equation}
Recall that Nash bargaining asserts that given a disagreement alternative $a$, agents $i$ and $j$ choose that alternative $o\in \S$ that maximizes:
$$ \mbox{Nash product } = \left( d(p_i,a) - d(p_i,o) \right) \times   \left( d(p_j,a) - d(p_j,o) \right) $$
Maximizing this on a general metric yields that $d(\B(i, j, a)$ will be chosen on the $p_i$ to $p_j$ shortest path (that is, the Pareto efficient frontier) at a distance of $\frac{d(p_i,p_j)}{2} + \frac{d(p_i,a) - d(p_j,a)}{2}$ from $p_i$.  Therefore, we have that:
\begin{equation}
\label{equation:bargain_to_i}
d(\B(i,j,a), p_i) = \frac{d(p_i,p_j)}{2} + \frac{d(p_i,a) - d(p_j,a)}{2}
\end{equation} 
Now we can show equation~\ref{equation:lemma_statement} by repeatedly using the triangle inequality. We have
\begin{equation*}
\begin{split}
    d(\B(i, j, a), p_u) & \le d(p_i,p_u) + d(\B(i,j,a), p_i) \\
    & = d(p_i,p_u) + \frac{d(p_i,p_j)}{2} + \frac{d(p_i,a) - d(p_j,a)}{2} \mbox{ \big{[}equation~\ref{equation:bargain_to_i}\big{]}} \\
    & \leq d(p_i, a^*) + d(p_u, a^*) + \frac{d(p_i, a^*) + d(p_j, a^*) + d(p_i, a^*) + d(a, a^*)}{2} - \frac{d(p_j,a)}{2} \\ 
    & = Z_u + 2Z_i + \frac{Z_j + Z_a - d(p_j, a)}{2} \\
\end{split}
\end{equation*}
Note that we can apply the triangle inequality again to get two bounds, since $Z_j \leq d(p_j,a) + Z_a$ and $Z_a \leq d(p_j,a) + Z_j$.  Since both bounds must hold, we have that $d(\B(i, j, a), p_u) \leq Z_u + 2Z_i + \min{(Z_a,Z_j)}$.
\end{proof}

It is not hard to see that this bound is tight in the worst case.  Suppose agents $p_i, p_j, p_u$ form the leaves of a 3-star (with weight 1 edges) and $a$ is the center of the star.  Now suppose there is a point $o$ connected to $p_i$ and to $p_j$ each by edges of weight $1-\epsilon$ for $\epsilon > 0$. $o$ will clearly be the bargaining outcome, so $d(\B(i, j, a), p_u) = 3 - \epsilon$.  But the right hand side of the inequality, $Z_u + 2 \min{(Z_i,Z_j)} + \min{(Z_a, \max{(Z_i,Z_j)})} = 3$, so the inequality becomes tight as $\epsilon \rightarrow 0$. 

\medskip 
Using this characterization, we show that the Squared-Distortion bound remains constant even for general metrics. 


\begin{theorem}
The Squared-Distortion of sequential deliberation for $T \geq 1$ is at most $41$ when the space of alternatives and bliss points lies in some metric. Furthermore, the Squared-Distortion of random dictatorship is unbounded.
\label{thm:squaredDistortionGeneral}
\end{theorem}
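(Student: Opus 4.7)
The theorem makes two independent claims, which I will prove separately: the $\leq 41$ upper bound on Squared-Distortion for sequential deliberation, and unboundedness for random dictatorship.

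For the random-dictatorship lower bound, I would construct an adversarial family of instances. Place $n-1$ agents at a common point $p$ and one outlier agent at a point $q$ with $d(p,q) = M$. Then $a^* = p$ and $SC(a^*) = M$, so $(SC(a^*))^2 = M^2$. Random dictatorship picks $q$ with probability $1/n$, yielding social cost $(n-1)M$, and otherwise produces social cost $M$. A direct calculation gives $\mathbb{E}[SC^2] = (n-1)M^2$, so the Squared-Distortion is at least $n-1$, which diverges as $n \to \infty$.

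For the upper bound, I would start from Lemma~\ref{lemma:bargain_to_u}. Summing that inequality over $u \in \N$ yields $SC(o) \leq OPT + U + V$, where $U := 2n\min(Z_i, Z_j)$ and $V := n\min(Z_a, \max(Z_i, Z_j))$, and $o = \B(i,j,a)$ is the outcome of a single bargaining round. To compute $\mathbb{E}[SC(o)^2 \mid a]$ I would expand $(OPT + U + V)^2$ and exploit independence of $i$ and $j$ together with the two key inequalities $\min(Z_i, Z_j)^2 \leq Z_i Z_j$ and $\min(Z_a, \max(Z_i, Z_j))^2 \leq Z_a \cdot \max(Z_i, Z_j)$. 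These give $\mathbb{E}[U^2] \leq 4\, OPT^2$ (via $\mathbb{E}[Z_i Z_j] = (OPT/n)^2$), $\mathbb{E}[V^2] \leq 2 n Z_a \cdot OPT$ (using $\mathbb{E}[\max(Z_i, Z_j)] \leq 2\, OPT/n$), $\mathbb{E}[U], \mathbb{E}[V] \leq 2\, OPT$, and the cross term $\mathbb{E}[UV] \leq 2\, OPT^2$ via $\min \cdot \max = Z_i Z_j$. Collecting these yields an estimate of the form $\mathbb{E}[SC(o)^2 \mid a] \leq C_1\, OPT^2 + C_2 \, n Z_a \cdot OPT$ for absolute constants $C_1, C_2$.

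The main obstacle is the $n Z_a$ term, since the initial disagreement alternative is a fixed bliss point and $Z_{a^1}$ can be as large as $OPT$. To handle this I would run a parallel argument: reproducing the proof of Lemma~\ref{lemma:bargain_to_u} with $p_u$ replaced by $a^*$ (so the $Z_u$ term vanishes) yields $d(o, a^*) \leq 2 \min(Z_i, Z_j) + \min(Z_a, \max(Z_i, Z_j))$. Taking expectation over $(i, j)$ conditional on $a$ gives $\mathbb{E}[d(o, a^*) \mid a] \leq 4\, OPT/n$ for \emph{every} disagreement alternative $a$, so after a single round of bargaining $\mathbb{E}[n Z_{a^t}] \leq 4\, OPT$. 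Substituting this into the conditional second-moment bound and carefully tracking the constants yields $\mathbb{E}[SC(a^T)^2] \leq 41 \cdot OPT^2$, completing the proof.
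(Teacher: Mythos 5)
Your proposal is correct and structurally follows the same route as the paper: both start from Lemma~\ref{lemma:bargain_to_u}, sum over $u$ to get the decomposition $SC(o) \le \bigl(\sum_k Z_k\bigr) + 2n\min(Z_i,Z_j) + n\min(Z_a,\max(Z_i,Z_j))$, expand the square, use independence of $i$ and $j$, and apply $\min(x,y)^2 \le xy$ and $\max(x,y) \le x+y$. The genuine difference is in how the $Z_a$-dependent terms are dispatched. The paper stays with a fixed (worst-case) $a$, substitutes the triangle inequality $Z_a \le d(a,p_i)+Z_i$ into every cross term, and then invokes the first-moment bound $\sum_i d(a,p_i)\le 3\sum_i Z_i$ from Theorem~\ref{theorem:Distortion-general} (a step the paper applies somewhat informally, since that bound is really an expectation over the previous round's random agents). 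You instead re-run the Lemma~\ref{lemma:bargain_to_u} derivation with $p_u$ replaced by $a^*$ to obtain the clean statement $\mathbb{E}[d(o,a^*)\mid a] \le 4\sum_k Z_k / n$ for every disagreement alternative $a$, which gives $\mathbb{E}[n Z_{a^T}] \le 4\sum_k Z_k$ whenever $a^T$ is itself the output of a bargaining round, and then take total expectation. Your route is more explicit about what is being conditioned on and where the expectation over $a$ enters, and would in fact yield a constant around $25$ rather than $41$ if you track it carefully (so your claim of $\le 41$ is safe but conservative). The one place both arguments are a little loose is the very first round: for $T=1$ the final-round disagreement alternative is the seed $a^1$, a single agent's bliss point, and $n Z_{a^1}$ is not $O(\sum_k Z_k)$ for an adversarial seed -- this needs either that $a^1$ is drawn uniformly at random from bliss points (so $\mathbb{E}[nZ_{a^1}] = \sum_k Z_k$) or a separate one-round argument; the paper glosses over the identical point. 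Your random-dictatorship lower bound (single outlier at distance $M$) is a correct construction, essentially equivalent to the paper's two-point graph with mass fractions $f$ and $1-f$.
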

\begin{proof}
Let $OPT$ be the squared social cost of $a^*$.  Note that OPT does not depend on $T$.  In particular, $OPT = \left( \sum_{i \in \N} Z_i \right)^2 = \sum_{i,j \in \N} Z_i Z_j$.

For arbitrary $T$, the disagreement alternative in the final step of bargaining generated by sequential deliberation.  We will therefore rely on the first moment bound given in Theorem~\ref{theorem:Distortion-general} for such an outcome. Let $ALG$ be the expected squared social cost of sequential deliberation with $T$ steps, where the disagreement alternative $a \in \S$ is used in the final round of bargaining.  $i$ and $j$ are the last two agents to bargain, and $p_k$ is the bliss point of an arbitrary agent.  We can write $ALG$ as
\begin{equation*}
\begin{split}
    ALG & = \sum_{i \in \N} \sum_{j \in \N} \frac{1}{|\N|^2} \left(\sum_{k \in \N} d(\B(i, j, a), p_k)\right)^2 \\
    & \leq \frac{1}{|\N|^2} \sum_{i,j \in \N} \left(\sum_{k \in \N} Z_k + 2\min{(Z_i,Z_j)} + \min{(Z_a, \max{(Z_i,Z_j)})}\right)^2 \mbox{ \big{[}By lemma~\ref{lemma:bargain_to_u}\big{]}}\\
    & = \frac{1}{|\N|^2}\sum_{i,j\in \N} \left(2|\N|\min{(Z_i,Z_j)} + |\N|\min{(Z_a, \max{(Z_i,Z_j)})} + \left( \sum_{k \in \N} Z_k \right)  \right)^2 \\
\end{split}
\end{equation*}

Now we expand the square and analyze term by term, using the facts that $\min{(x,y)}^2 \leq x \times y$ and $\max{(x,y)} \leq x+y$.
\begin{equation*}
\begin{split}
    & = \frac{1}{|\N|^2}\sum_{i,j \in \N} \Bigg{(} 4|\N|^2\min{(Z_i,Z_j)}^2 + |\N|^2\min{(Z_a, \max{(Z_i,Z_j)})}^2 + \left( \sum_{k \in \N} Z_k \right)^2 \\
    & \hspace{1.5cm} + 4|\N|^2\min{(Z_i,Z_j)}\min{(Z_a, \max{(Z_i,Z_j)})} + 4|\N|\min{(Z_i,Z_j)}\left( \sum_{k \in \N} Z_k \right) \\
    & \hspace{1.5cm} + 2|\N|\min{(Z_a, \max{(Z_i,Z_j)})}\left( \sum_{k \in \N} Z_k \right) \Bigg{)}\\ 
    & \leq \frac{1}{|\N|^2}\sum_{i,j \in \N} \Bigg{(} 4|\N|^2 Z_i Z_j + |\N|^2\min{(Z_a, \max{(Z_i,Z_j)})}^2 + OPT \\
    & \hspace{1.5cm} + 4|\N|^2 Z_i Z_a + 4|\N| Z_i \left( \sum_{k \in \N} Z_k \right) + 2|\N|Z_a \left( \sum_{k \in \N} Z_k \right) \Bigg{)} \\
\end{split}
\end{equation*}    

We can trivially sum out the terms not involving $a$, leaving us with inequality~\ref{ineq:Za_terms}.
\begin{equation}
\label{ineq:Za_terms}
ALG \leq 9OPT + \sum_{i,j \in \N} \Bigg{(} Z_a (Z_i + Z_j) + 4 Z_i Z_a + \frac{2}{|\N|}Z_a \left( \sum_{k \in \N} Z_k \right) \Bigg{)}
\end{equation}

Note that the triangle inequality implies that for any $i \in \N$, $Z_a \leq d(a,p_i) + Z_i$.  Applying this repeatedly in inequality~\ref{ineq:Za_terms} and simplifying yields:
\begin{equation*}
\begin{split}
    ALG & \leq 9OPT + \sum_{i,j \in \N} \left( 6 Z_i Z_j + 5 d(a,p_j)Z_i + d(a,p_i)Z_j + \frac{2}{|\N|} (d(a,p_i) + Z_i) \left( \sum_{k \in \N} Z_k \right) \right) \\
    & = 17 OPT + \sum_{i \in \N} \left( 2d(a,p_i)\left( \sum_{k \in \N} Z_k \right) + \sum_{j \in \N} 5 d(a,p_j)Z_i + d(a,p_i)Z_j \right)
\end{split}
\end{equation*}

Recall that Theorem~\ref{theorem:Distortion-general} implies that $\sum_{i \in \N} d(a, p_i) \leq 3 \sum_{i \in \N} Z_i$.  
\begin{equation*}
\begin{split}
    ALG & \leq 17 OPT + 6 OPT + 15 OPT + 3 OPT = 41 OPT
\end{split}
\end{equation*}

\medskip
It is easy to see that Random Dictatorship has an unbounded Squared-Distortion.  Recall that Random Dictatorship chooses the bliss point of an agent chosen uniformly at random. Consider the simple graph with two nodes, a fraction $f$ of the agents on one node and $1-f$ on the other.  Let $f < 1/2$.  Then the expected squared social cost of Random Dictatorship is just $f (1-f)^2 + (1-f) f^2$ whereas the optimal solution has squared social cost $f^2$, so Random Dictatorship has Squared-Distortion $(1-f)^2/f + (1-f)$, which is unbounded as $f \rightarrow 0$.
\end{proof}

Though we omit the proof, we can show using a simpler analysis  that the Squared Distortion of sequential deliberation for median spaces is at most a factor of $3$.

\section{Open Questions}
Our work is the first step to developing a theory around practical deliberation schemes. We suggest several future directions. First, we do not have a general characterization of the Distortion of sequential deliberation for metric spaces.  We have shown that for general metric spaces there is a small but pessimistic bound on the Distortion of 3, but that for specific metric spaces the Distortion may be much lower.  We do not have a complete characterization of what separates these good and bad regimes.  

More broadly, an interesting question is extending our work to take opinion dynamics into account, {\em i.e.}, proving stronger guarantees if we assume that when two agents deliberate, each agent's opinion moves slightly towards the other agent's opinion and the outside alternative.  Furthermore, though we have shown that all agents deliberating at the same time does not improve on dictatorship, it is not clear how to extend our results to more than two agents negotiating at the same time. This runs into the challenges in understanding and modeling multiplayer bargaining~\cite{Harsanyi,Krishna,OddManOut}.

Finally, it would be interesting to conduct experiments to measure the efficacy of our framework on complex, real world social choice scenarios. There are several practical hurdles that need to be overcome before such a system can be feasibly deployed. In a related sense, it would be interesting to develop an axiomatic theory for deliberation, much like that for bargaining~\cite{NashBargaining}, and show that sequential deliberation arises naturally from a set of axioms. 


\newpage

\bibliographystyle{splncs03}
\bibliography{refs.bib}
\appendix
\section{Random Dictatorship and $N$-Person Bargaining on Median Graphs}
\label{sec:random}
A simple baseline algorithm for our problem is Random Dictatorship: Output the bliss point $p_u$ of a random agent $u \in \N$. Such an algorithm also satisfies the desiderata that we presented in Section~\ref{sec:properties}: The outcome is trivially Pareto-efficient and the mechanism is truthful. The Distortion of Random Dictatorship on social cost is at most $2$, and Theorem~\ref{lb:random} shows this bound is tight. We now show two ways that sequential deliberation is superior to Random Dictatorship despite their seeming similarity.   These statements are  stronger than what the worst-case Distortion bounds imply.

\begin{theorem}
\label{thm:random}
For any set of agents $\N$ and outcomes $\S$ that define a median graph, the expected social cost of sequential deliberation is at most that of Random Dictatorship.
\end{theorem}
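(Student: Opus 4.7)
\textbf{Proof proposal for Theorem~\ref{thm:random}.}

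The plan is to invoke the hypercube reduction of Corollary~\ref{lemma:hypercube} so that sequential deliberation decomposes into a product of independent two-state Markov chains, one per coordinate of the hypercube, and then to compare the stationary expected social cost per dimension against the corresponding per-dimension contribution of Random Dictatorship. Both mechanisms are coordinate-wise in the sense that, for each coordinate $k$, the random bit chosen depends only on the $k$-th coordinates of the agents' bliss points. So it suffices to prove the inequality dimension by dimension and sum.

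Concretely, let $f_k$ denote the fraction of agents whose bliss point has value $1$ in coordinate $k$. For Random Dictatorship, the outcome has $k$-th bit equal to $1$ with probability $f_k$, so its expected social cost in dimension $k$ is $|\N|\bigl[f_k(1-f_k) + (1-f_k)f_k\bigr] = 2|\N|f_k(1-f_k)$. From the analysis inside the proof of Theorem~\ref{theorem:main}, the stationary probability that the sequential deliberation outcome has $k$-th bit $1$ is $\mathcal{P}^*_k = f_k^2/\bigl(f_k^2+(1-f_k)^2\bigr)$, giving stationary expected social cost in dimension $k$ equal to
\[
|\N|\left[\mathcal{P}^*_k(1-f_k)+(1-\mathcal{P}^*_k)f_k\right] = |\N|\,\frac{f_k(1-f_k)}{f_k^2+(1-f_k)^2}.
\]

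Taking the ratio of sequential deliberation's cost to Random Dictatorship's cost in dimension $k$ gives $\tfrac{1}{2\bigl(f_k^2+(1-f_k)^2\bigr)}$, which is at most $1$ since the convex function $x\mapsto x^2+(1-x)^2$ attains its minimum of $1/2$ at $x=1/2$. Thus in every coordinate the stationary expected social cost of sequential deliberation is at most that of Random Dictatorship, and summing over coordinates (using the isometry of the embedding so that Hamming distance on $Q$ coincides with the shortest-path distance on $G$) yields the theorem.

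There is no real obstacle — the argument is essentially a per-coordinate arithmetic comparison once the hypercube reduction is in place. The only thing to double check is the treatment of the initial disagreement alternative: the statement compares \emph{expected} social costs, and the cleanest reading is at the stationary distribution (which by Theorem~\ref{theorem:main} is reached up to vanishing error in $O(\log(1/\epsilon))$ rounds and by Theorem~\ref{thm:unique} is the unique limiting distribution, independent of $a^1$), so the comparison is well defined irrespective of the arbitrary starting point.
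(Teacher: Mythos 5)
Your proposal is correct and follows essentially the same route as the paper: reduce to the hypercube via Corollary~\ref{lemma:hypercube}, compare the per-dimension stationary cost $|\N|\,f_k(1-f_k)/\bigl(f_k^2+(1-f_k)^2\bigr)$ of deliberation against the per-dimension cost $2|\N|f_k(1-f_k)$ of Random Dictatorship, and sum over coordinates. The paper leaves the final inequality as ``an easy exercise''; your observation that the ratio equals $1/\bigl(2(f_k^2+(1-f_k)^2)\bigr)\le 1$ supplies exactly that step.
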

\begin{proof}
As in the proof of Theorem~\ref{theorem:main}, we consider the hypercube embedding, and decompose the social cost along each dimension $k$. Let $f_k$ denote the fraction of agents mapping to $1$ along dimension $k$. Then, the previous proof showed that the social cost of sequential deliberation is
$$ \mbox{Social Cost of Sequential Deliberation } =  |\mathcal{N}|\left(\frac{f_k(1-f_k)}{1+2f_k^2-2f_k} \right)$$
Random Dictatorship would choose an agent mapping to $1$ with probability $f_k$, in which case a fraction $(1-f_k)$ of agents pay cost $1$, and vice versa. Therefore its social cost is
$$ \mbox{Social Cost of Random Dictatorship } = |\mathcal{N}| \left(2 f_k(1-f_k)\right)$$
It is an easy exercise to show that for all $f_k \in [0,1]$, the former cost is at most the latter cost. Since this holds dimension by dimension, the overall social cost for sequential deliberation is at most that of Random Dictatorship.
\end{proof}

This shows that instance by instance, sequential deliberation does at least as well on the first moment of social cost as Random Dictatorship, and in addition, it has bounded second moment in the worst case, which Random Dictatorship does not have.

A different comparison between sequential deliberation and Random Dictatorship concerns the convergence to optimality under the ``easy case'' of nearly unanimous agents, i.e., when most agents have exactly the same bliss point.  The result for sequential deliberation is a simple corollary of Theorem~\ref{theorem:main} but is presented here for purpose of its comparison with Random Dictatorship.

\begin{definition}
    A set of agents $\N$ is said to be $\epsilon$-\textbf{unanimous} on a set of outcomes $\S$ if all but an $\epsilon$ fraction of the agents have the same bliss point in $\S$.
\end{definition}

\begin{corollary}
\label{cor:unanimous}
    Sequential deliberation with a $\epsilon$-unanimous set of agents on a median graph has Distortion at most  $(1+\epsilon)$. In contrast, Random Dictatorship has Distortion $2 - \epsilon$.
\end{corollary}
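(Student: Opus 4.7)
The plan is to apply Theorem~\ref{theorem:main} dimension by dimension after passing to a hypercube embedding via Corollary~\ref{lemma:hypercube}. In the hypercube picture, $\epsilon$-unanimity means that some reference point $q' = \phi(q)$ is shared by a $(1-\epsilon)$ fraction of the agents, so in every dimension $k$ the minority fraction satisfies $\min(f_k, 1-f_k) \le \epsilon$. After flipping coordinate labels where needed, I may assume $f_k \in [0, \epsilon]$ in every dimension. Theorem~\ref{theorem:main} then shows that the limiting expected cost contribution of dimension $k$ under sequential deliberation is $|\N| \frac{f_k(1-f_k)}{f_k^2 + (1-f_k)^2}$, while the optimal contribution along dimension $k$ is $|\N| f_k$, so the per-dimension ratio is $g(f_k) := \frac{1-f_k}{f_k^2 + (1-f_k)^2}$.

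The algebraic heart of the argument is to show $g(f) \le 1 + \epsilon$ for $f \in [0, \epsilon]$. I would first note that $g$ is monotonically increasing on $[0, 1 - 1/\sqrt{2}]$ (this is already visible from the derivative computation in the proof of Theorem~\ref{theorem:main}), so for $\epsilon \le 1 - 1/\sqrt{2}$ the maximum on $[0, \epsilon]$ is $g(\epsilon) = \frac{1-\epsilon}{1 - 2\epsilon + 2\epsilon^2}$. The inequality $g(\epsilon) \le 1 + \epsilon$ then reduces to
\[
(1+\epsilon)(1 - 2\epsilon + 2\epsilon^2) - (1 - \epsilon) = 2\epsilon^3 \ge 0,
\]
which is immediate. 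For $\epsilon > 1 - 1/\sqrt{2}$, I would simply invoke Theorem~\ref{theorem:main} directly: the global bound of $1.208$ on $g$ is dominated by $1 + \epsilon$ whenever $\epsilon \ge 0.208$. Because the overall expected social cost and the optimum are both sums over dimensions, a per-dimension bound of $1 + \epsilon$ lifts to an overall Distortion bound of $1 + \epsilon$.

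For the Random Dictatorship bound, I would exhibit an explicit $\epsilon$-unanimous instance that saturates the claim: take a two-vertex median graph (an edge) with a $(1-\epsilon)$ fraction of agents at vertex $A$ and an $\epsilon$ fraction at vertex $B$. The optimum places the outcome at $A$ with social cost $\epsilon |\N|$. Random Dictatorship selects $A$ with probability $1-\epsilon$ (cost $\epsilon |\N|$) and $B$ with probability $\epsilon$ (cost $(1-\epsilon)|\N|$), giving expected cost $2\epsilon(1-\epsilon)|\N|$ and therefore Distortion $2(1-\epsilon)$, which matches the stated $2 - \epsilon$ up to lower order terms and in particular stays strictly bounded away from $1$ as $\epsilon \to 0$.

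I expect the only non-routine step to be verifying monotonicity of $g$ on $[0, 1 - 1/\sqrt{2}]$ and the cubic inequality $2\epsilon^3 \ge 0$; neither is truly difficult, but both require care to handle the two regimes $\epsilon \le 1 - 1/\sqrt{2}$ and $\epsilon > 1 - 1/\sqrt{2}$ cleanly so that the resulting bound of $1 + \epsilon$ holds uniformly in $\epsilon$.
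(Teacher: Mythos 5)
Your proposal is correct and follows essentially the same route as the paper's proof: reduce to the hypercube, note that $\epsilon$-unanimity forces the minority fraction $f_k \le \epsilon$ in every dimension, use monotonicity of $\frac{1-f}{f^2+(1-f)^2}$ on the relevant range to evaluate at $f=\epsilon$ (your cross-multiplied form $2\epsilon^3 \ge 0$ is just a cleaner rendering of the paper's inequality chain), handle large $\epsilon$ via the global $1.208$ bound, and exhibit the same two-vertex instance for Random Dictatorship with ratio $2(1-\epsilon)$. No gaps.
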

\begin{proof}
As in Theorem~\ref{theorem:main}, we can analyze sequential deliberation on the hypercube.  As before, let $f_k$ be the fraction of agents with a 1 in dimension $k$ and assume w.l.o.g. that $f_k \in [0,1/2]$ so that the optimal point is the all 0 vector.  In the proof of Theorem~\ref{theorem:main}, we derived the following as a bound for the Distortion of sequential deliberation. 
    \begin{equation}
        \label{equation:max}
        \mbox{max}_{f_k \in [0,1/2]} \frac{1-f_k}{1+2f_k^2 - 2f_k} \le 1.208
    \end{equation}
If $\epsilon \geq 0.208$ then the result is immediate. Otherwise, note that under $\epsilon$-unanimity, it must in fact be the case that $f_k \in [0, \epsilon]$, so the maximization is over a more restricted domain.  Furthermore, for $\epsilon < 0.208$, the function in Equation~\ref{equation:max} is monotonically increasing and thus obtains its maximum at the right boundary $\epsilon$.  It is easy to see this by taking the derivative.  Thus, for $\epsilon < 0.208$, the expected Distortion of the limiting distribution of sequential deliberation is upper bounded by
    
    \begin{equation*}
        \begin{split}
            \mbox{max}_{f_k \in [0, \epsilon]} \frac{1-f_k}{1+2f_k^2 - 2f_k} & = \frac{1-\epsilon}{1+2\epsilon^2 - 2\epsilon} \\
            & \leq \frac{1-\epsilon}{1+2\epsilon^2 - 2\epsilon - \frac{2\epsilon^3}{1+\epsilon}} \\
            & = 1 + \epsilon
        \end{split}
    \end{equation*}
    
In contrast it is easy to see that even for the simple median graph consisting of two points (the one dimensional hypercube), Random Dictatorship can have Distortion 2 even as the agents are $\epsilon$-unanimous.  On such a graph, the optimal social cost is just $\epsilon |\N|$ but Random Dictatorship has expected social cost $(1-\epsilon) \epsilon |\N| + \epsilon (1-\epsilon) |\N| = 2 \epsilon (1-\epsilon) |\N|$.  The ratio goes to 2 as $\epsilon$ goes to $0$.  This shows Random Dictatorship can actually obtain its worst case Distortion even on the simplest possible instances.
\end{proof}

\paragraph{$N$-Person Nash Bargaining.} Finally, we present a canonical instance of bargaining that devolves into Random Dictatorship by allowing \textit{all} agents to bargain in a single round.  Consider a line metric. Suppose $n_1 > 0$ agents lie at $0$, and $n_2 = N - n_1 > 0$ agents at the point $1$. Suppose the disagreement outcome $o$ is any point $x \in [0,1]$.  Suppose all $N$ agents simultaneously bargain to maximize the Nash product of the increase in their utility with respect to $o$. It is easy to show the following result:

\begin{theorem}
$N$-person Nash bargaining for $N$ agents whose bliss points lie at $0$ and $1$ on a line, with the disagreement outcome $x \in [0,1]$, outputs the same point $x$ as the final outcome.
\end{theorem}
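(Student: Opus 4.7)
The plan is to write out the Nash product explicitly and observe that the individual rationality constraints from the two groups of agents force the feasible set to collapse to the single point $x$.

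First, I would fix a candidate outcome $y \in [0,1]$ and compute, for each agent, the change in disutility relative to the disagreement outcome $x$. An agent whose bliss point is $0$ has disutility $y$ at outcome $y$ and disutility $x$ at outcome $x$, so its gain is $x - y$. Likewise an agent whose bliss point is $1$ has gain $(1-x) - (1-y) = y - x$. Thus the Nash product takes the form
\[
\Pi(y) \;=\; (x-y)^{n_1} \, (y-x)^{n_2},
\]
subject to the individual rationality constraints $x - y \ge 0$ and $y - x \ge 0$.

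Next I would observe that these two constraints can only be jointly satisfied when $y = x$, because $n_1,n_2 > 0$ means both factors must be nonnegative. Hence the feasible set for $N$-person Nash bargaining is the singleton $\{x\}$, and the Nash product attains its (degenerate) maximum value $0$ there. Invoking the tie-breaking rule from Section~\ref{sec:model1} that, among maximizers, the outcome closest to the disagreement alternative is selected, the unique output is $y = x$.

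The proof is essentially immediate once the individual rationality constraints are written down, so there is no substantial obstacle: the ``difficulty'' is purely conceptual, namely that with agents on both sides of $x$ no Pareto improvement over the disagreement outcome exists, so simultaneous $N$-person bargaining gets stuck at whatever alternative it starts from. This is exactly the pathology that motivates pairwise sequential deliberation: pairs of agents drawn at random will frequently both lie on the same side of $x$ and thus can jointly move the outcome, whereas in the simultaneous formulation any motion hurts somebody and is vetoed.
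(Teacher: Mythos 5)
Your proof is correct and follows essentially the same route as the paper, which argues in one line that any deviation from $x$ makes at least one agent strictly worse off and that agent will stick to disagreeing; you have simply made the individual-rationality constraints $x - y \ge 0$ and $y - x \ge 0$ explicit and noted that they collapse the feasible set to $\{x\}$. The added detail (the explicit Nash product and the tie-breaking remark) is harmless and, if anything, slightly more careful than the paper's own justification.
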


This follows because any deviation makes at least one agent strictly less happy, and this agent will stick to disagreeing. Suppose $x$ corresponds to the bliss point of a random agent, then this mechanism coincides with Random Dictatorship and has Distortion $2$. This shows that the Distortion of $N$-person Nash bargaining even with a random bliss point as disagreement outcome is worse than pairwise deliberation. Note that the result only needs enough agents so that there is one agent at $0$ and the other at $1$.  This confirms (on the line) the intuitive observation that deliberation in large groups tends to break down and produce random outcomes.


\section{Distortion of Deliberation on the Unit Simplex}
\label{sec:budget}
We have shown that for general metric spaces there is a small but pessimistic bound on the Distortion of 3, but that for median graphs, the Distortion is much lower.  This begs the question: {\em Are there spaces beyond median graphs for which sequential deliberation has Distortion much less than $3$?}  Though we leave the characterization of such spaces as an open question, we show that our results do extend beyond median graphs. Consider the following special case motivated by budgeting applications. The outcome space $\S$ is the $d$-dimensional standard simplex. Agents are located at the vertices of this simplex, and the distances are $\ell_1$. It is clear that this is not a median space; in fact, the shortest paths between pairs of vertices  from a triplet do not intersect. We can view the vertices as items of unit size, and the restriction to a simplex as a unit budget constraint. Agents' bliss points correspond to single items, while the outcome space is all possible fractional allocations. We show that the Distortion of sequential deliberation in this setting is $4/3$, while the corresponding bound for random dictatorship is again $2$. Though this setting is stylized in the sense that approval voting on items coincides with the social optimum, our scheme itself is general and not tied to the designer knowing that utilities lie in this specific space.

The space $\S$ is the $d$-dimensional standard simplex: 
$$ \S = \left\{ \vec{x} \in \Re^d \ | \  \sum_{j=1}^d x_j = 1; \ \  x_j \ge 0 \ \forall j \in \{1,2,\ldots, d\} \right\}$$
The are $N$ agents whose bliss points are its vertices. We denote the vertex with $x_j = 1$ as $v_j$. Suppose the probability mass of agents located at vertex $j$ is $p_j$, so that $\sum_{j=1}^d p_j = 1$.

Consider a disagreement alternative $\vec{a} \in \S$, and Nash bargaining between agents at $v_i$ and $v_j$ with $j \neq i$. It is easy to check that this produces the outcome with $o_i = (1 + a_i - a_j)/2$, and $o_j =  (1+ a_j - a_i)/2$. When $i = j$, we have $o_i = 1$.

Let $\pi_{\sigma}$ denote the distribution over alternatives $\sigma \in \S$ in steady state. Define the expected value of coordinate $i$ as: 
$$ s_i = \sum_{\sigma} \pi_{\sigma} \sigma_i \qquad \forall i \in \{1,2,\ldots, d\}$$
The steady state conditions imply the set of equations:
$$ \sum_{\sigma} \pi_{\sigma} \sigma_i = p_i^2 +  \sum_{\sigma} \pi_{\sigma} \sum_{j \neq i} 2 p_i p_j \left(\frac{1 + \sigma_i - \sigma_j}{2}\right)$$
The LHS is the expected value of $\sigma_i$ according to distribution $\pi$. The RHS is how the value evolves in one step, which should yield the same quantity. By definition of $s_i$, we therefore have:
$$ s_i = p_i^2 + \sum_{j \neq i} 2 p_i p_j \left( \frac{1 + s_i - s_j}{2} \right) \qquad \forall i \in \{1,2,\ldots, d\}$$
and $ \sum_{i=1}^d s_i = 1$.
Solving for this system, we obtain:
$$ s_i = \frac{p_i}{1-p_i} \times \frac{1}{\sum_{j=1}^d \frac{p_j}{1-p_j}}$$

Note next that the optimal social cost is $OPT = \min_i (1-p_i)$. Without loss of generality, assume this is $1-p_1$. Let $\alpha = p_1$, so that $OPT = 1 - \alpha$. The social cost with respect to the steady state distribution is given by
\begin{eqnarray*}
ALG  & = & \sum_{\sigma} \pi_{\sigma} \sum_i p_i (1-\sigma_i) = \sum_i p_i (1 - s_i) = 1 - \sum_i p_i s_i  \\ 
  & = & 1 - \frac{\sum_i  \frac{p_i^2}{1-p_i}}{\sum_j \frac{p_j}{1-p_j}} = \frac{\alpha + \sum_{j > 1} p_j}{\frac{\alpha}{1-\alpha} + \sum_{j > 1} \frac{p_j}{1-p_j}}  = \frac{1}{\frac{\alpha}{1-\alpha} + \sum_{j > 1} \frac{p_j}{1-p_j}} \\
  & \le &  \frac{1}{\frac{\alpha}{1-\alpha} + \sum_{j > 1} p_j}  =  \frac{1}{\frac{\alpha}{1-\alpha} + 1 - \alpha} = \frac{1-\alpha}{1 - \alpha + \alpha^2}
\end{eqnarray*}
Therefore, the Distortion of sequential deliberation is at most 
$$ \max_{\alpha \in [0,1]} \frac{1}{1-\alpha + \alpha^2} = \frac{4}{3}$$

\medskip
Note that on this instance, the distortion of random dictatorship is 
$$ \frac{\alpha(1-\alpha) + \sum_{i > 1} p_i(1-p_i)}{1-\alpha}  \le 1+ \alpha \le 2$$
and this bound can easily be shown to be tight.

\end{document}